\newtheorem{corollary}{Corollary}
\newtheorem{observation}{Observation}
\newtheorem{proposition}{Proposition}
\def\makeLineNumberLeft{%
  \linenumberfont\llap{\hb@xt@\linenumberwidth{\LineNumber\hss}\hskip\linenumbersep}
  \hskip\columnwidth
  \rlap{\hskip\linenumbersep\hb@xt@\linenumberwidth{\hss\LineNumber}}\hss}
\begin{document}



\title{A Faster Method to Estimate Closeness Centrality Ranking}

%


\author{%
  Akrati Saxena*\\akrati.saxena@iitrpr.ac.in
  \and Ralucca Gera**\\rgera@nps.edu
  \and S. R. S. Iyengar*\\sudarshan@iitrpr.ac.in
  \and *Department of Computer Science and Engineering,\\ Indian Institute of Technology Ropar, India
  }
\date{%
  **Department of Applied Mathematics\\
Naval Postgraduate School, \\Monterey, CA 93943 USA
}

\maketitle

\begin{abstract}
Closeness centrality is one way of measuring how central a node is in the given network. The closeness centrality measure assigns a centrality value to each node based on its accessibility to the whole network. In real life applications, we are mainly interested in ranking nodes based on their centrality values. The classical method to compute the rank of a node first computes the closeness centrality of all nodes and then compares them to get its rank. Its time complexity is $O(n \cdot m + n)$, where $n$ represents total number of nodes, and $m$ represents total number of edges in the network. In the present work, we propose a heuristic method to fast estimate the closeness rank of a node in $O(\alpha \cdot m)$ time complexity, where $\alpha = 3$. We also propose an extended improved method using uniform sampling technique. This method better estimates the rank and it has the time complexity $O(\alpha \cdot m)$, where $\alpha \approx 10-100$. This is an excellent improvement over the classical centrality ranking method. The efficiency of the proposed methods is verified on real world scale-free social networks using absolute and weighted error functions.

\end{abstract}

\section{Introduction}


In network science, various centrality measures have been defined to identify important nodes based on the application context. As an example, one can consider identifying the location within a city where to place a new public service, so that it is easily accessible for everyone. Similarly, identifying central people that have ideal social network location for the purpose of information dissemination or network influence. In such kind of applications, the nodes who can access the entire network faster need to be selected. To capture this specific property of the reachability of nodes to the entire network, researchers have defined closeness centrality.

The closeness centrality of a node $u$ is defined as $C(u) = \frac{n-1}{\sum_{\forall v}d(u,v)}$, where $n$ is total number of nodes and $d(u,v)$ is the shortest distance between two nodes $u$ and $v$ \cite{freeman1978centrality}. It represents the closeness of a node to all other nodes in the given network.
The closeness centrality of a node can be computed by executing breadth first traversal (BFT) \cite{thomas2001introduction} from the respective node. The time complexity to compute the closeness centrality of a node is $O(m)$, where $m$ represents total number of edges in the network. The proposed definition of closeness centrality \cite{freeman1978centrality} is only applicable for the connected networks. In the present work, we have considered real world scale-free connected networks.

The proposed closeness centrality measure assigns a centrality value to each node. But in real life applications, we are mainly interested in the relative importance of the node rather than its centrality value. This can be measured using the closeness centrality rank of the node based on the computed closeness centrality values of all nodes. The classical method to compute the closeness rank of a node has two steps: ($1$) calculate the closeness centrality values of all nodes, and  ($2$) compare these values to determine the closeness rank of the node. The time complexity of the first step is $O(n \cdot m)$ as it computes the closeness centrality of all nodes, and for the second step, it is $O(n)$ as it compares the centrality value of the given node with all other nodes. So, the overall time complexity of this process is $O(n \cdot m +n)= O(n \cdot m)$ that is very high, given that the entire network is required. This method is not feasible for large real world networks because of its high time complexity and dynamic characteristics of the networks.

\begin{figure}[t]
\centering
\includegraphics[width=0.8\linewidth]{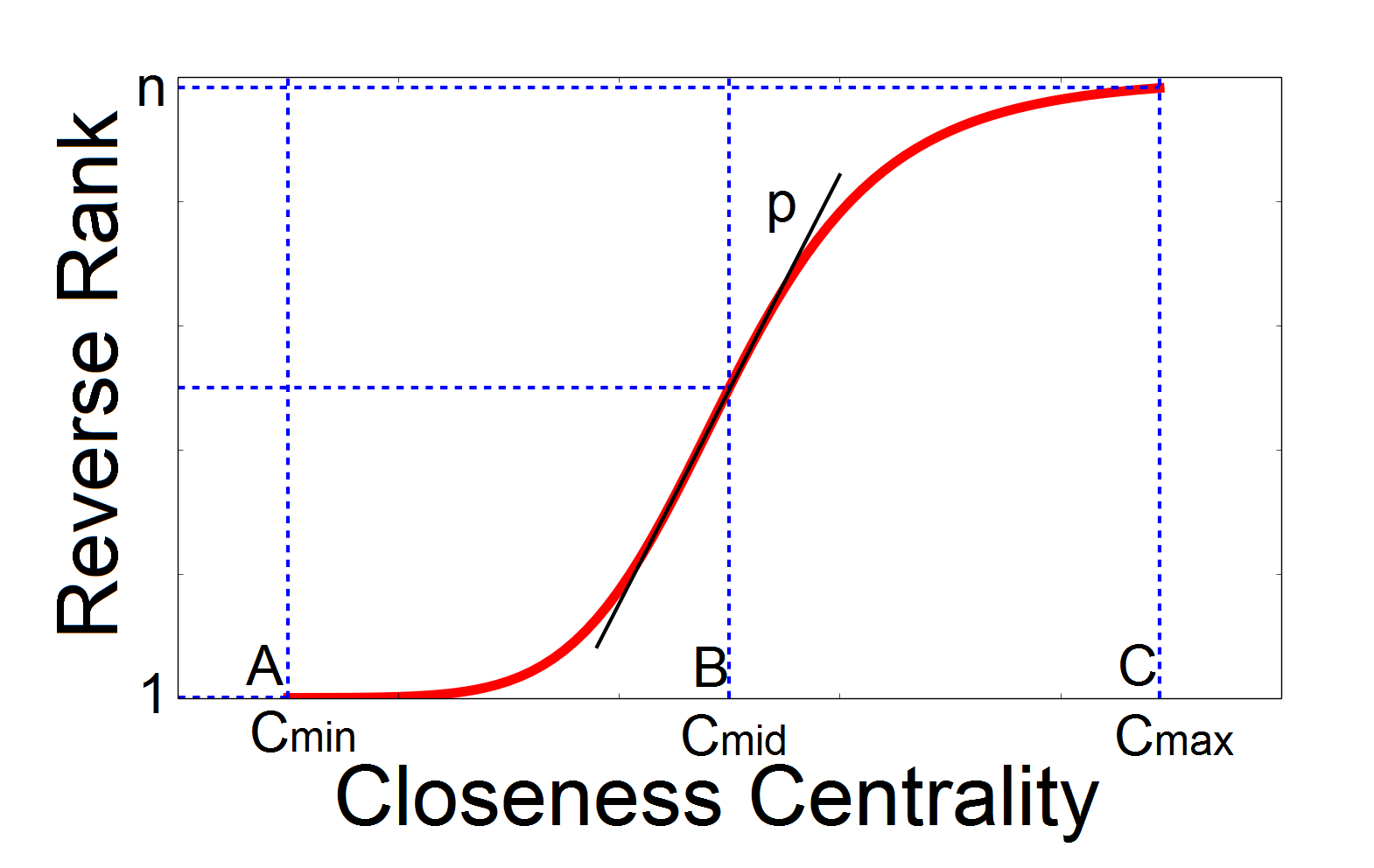}
\caption{Reverse Rank versus Closeness Centrality}
\label{fig:sorted}
\end{figure}

Real world complex networks such as WWW network, online social networks, collaboration networks, communications networks, are growing very fast with time. So, the computation of different centrality measures based on the global structure of the network is very high. Current literature focuses mainly on proposing approximation methods to compute closeness centrality measure. Even if we apply these approximation methods, we need to approximate closeness centrality values of all nodes to estimate the rank of a node, and its complexity is still high. Besides the size, there are other issues like real world networks are highly dynamic in nature. In these dynamic networks, the importance of different nodes keeps changing with time. So, to compute the rank of a node, we need the current complete snapshot of the network. Due to the large size of the network, it may not be feasible to download the entire network, store it, and process it.

In the present work, we propose heuristic and randomized heuristic methods to fast estimate the closeness centrality rank of a node. The complexity of the proposed methods is $O(m)$ that is a great improvement over the classical ranking method that has the complexity $O(nm)$. The proposed methods exploit the structural properties of the network to reduce the time complexity of closeness ranking. For example, the nodes in the center of a network have high closeness values, while the ones in the periphery have the least closeness values.
Closeness values of the middle layer nodes increase sharply from periphery to center. Due to this unique behavior of closeness centrality in real world scale-free social networks, we observe that the reverse ranking versus closeness centrality follows a sigmoid curve as shown in Figure~\ref{fig:sorted}. In reverse ranking, the node having the lowest closeness value has the highest rank $1$ and the node having the highest closeness value has the least rank.
This sigmoid curve helps to estimate the closeness rank of a node without computing the closeness values of all nodes.

The proposed methods are simulated on different types of real world networks, such as online social networks, collaboration network, communication network, and so on. The accuracy of the proposed methods is measured using absolute and weighted error functions. Results show that the proposed method can be used efficiently for large size dynamic networks.  

The main contributions of the paper are as follows:
\begin{itemize}
\item We study the characteristics of closeness centrality and their dependency on the structural properties of the real world scale-free social networks.
\item We propose methods to estimate the closeness centrality rank of a node in $O(m)$ time.
\item The proposed methods are simulated on real world networks and their efficiency is verified using absolute and weighted error functions.
\end{itemize}

As per the best of our knowledge, this is the first work in this direction.
The rest of this paper is organized as follows. In the next section, we discuss the brief literature on closeness centrality. Section \ref{sec-datasets} and \ref{sec-notation} explain datasets and notation used in the paper respectively. In section \ref{sec-ccb}, we study the behavior of closeness centrality. These observations help to construct the closeness ranking estimation method. In section \ref{sec-sam} and \ref{sec-ram}, we propose methods to estimate the closeness rank and discuss their complexity analysis. In section \ref{sec-results}, the simulation results are discussed. The paper is concluded in section \ref{sec-conclusion}. The proposed work further opens up various future directions that are discussed in the last section.

\section{Related Work}\label{relatedwork}

Closeness centrality denotes reachability of a node to the given network. In undirected and unweighted networks, the reachability of two nodes only considers the minimum number of hops to reach from one node to another. But in other types of networks like weighted networks, directed networks, the link weights and directions also affects the distance between two nodes.
So, the closeness centrality has been extended to these networks like weighted networks \cite{ruslan2015improved}, directed networks \cite{du2015new}, disconnected networks \cite{rochat2009closeness}, multilayer networks \cite{barzinpour2014clustering}, overlapped community structure \cite{tarkowski2016closeness}, and so on.

The complexity to compute the closeness centrality in large scale networks is very high. It has attracted the researchers towards the following problems related to measuring the closeness centrality of a node:
\begin{enumerate}
\item Update closeness centrality in dynamic networks,
\item Approximation algorithms for closeness centrality,
\item Identify top-$k$ nodes,
\item Other works like the computation of closeness centrality in parallel or distributed environment, its correlation with other centrality measures, and so on.
\end{enumerate}

Real world networks are highly dynamic and the computation of closeness centrality of all nodes for each change in the networks will be a cumbersome task. In dynamic networks, for each update, the closeness centrality of some nodes may remain unaffected. Kas et al. proposed a method to update closeness centrality in dynamic networks \cite{kas2013incremental}. The proposed method uses the set of affected nodes to update the closeness centrality whenever there is any addition, removal, or modification of nodes or edges. Yen also proposed an algorithm called CENDY (Closeness centrality and avErage path leNgth in DYnamic networks) to update closeness centrality whenever an edge is updated \cite{yen2013efficient}. Sariyuce et al. proposed a method to update closeness centrality using the level difference information of breadth first traversal \cite{sariyuce2013incremental}.

The classical method to compute closeness centrality of a node requires the entire network and it is costly for big networks. There are some works that provide approximation methods to compute it fast.
Cohen et al. proposed a sampling based method to approximate closeness centrality in directed and undirected networks \cite{dellingcomputing}. Eppstein et al. proposed a randomized approximation algorithm with time complexity $O(\frac{logn}{\epsilon^2} \cdot m)$ to approximate the closeness centrality within an additive error of $\epsilon \cdot \Delta$ with high probability, where $\Delta$ is the diameter of the network \cite{eppstein2004fast}. Rattigan used the concept of network structure index (NSI) to approximate the values of different centrality measures that are based on the shortest paths in the given network \cite{rattigan2006using}. Some other approximation methods for closeness centrality include \cite{chan2009fast, brandes2007centrality, pfeffer2012k}.

Most of the real life applications focus on identifying few top nodes having the highest closeness centrality. Okamoto et al. proposed a method to rank $k$ highest closeness centrality nodes using a hybrid of approximate and exact algorithms \cite{okamoto2008ranking}. Ufimtsev proposed an algorithm to identify high closeness centrality nodes using group testing \cite{ufimtsev2014extremely}. Olsen et al. presented an efficient technique to find $k$ most central nodes based on closeness centrality \cite{olsen2014efficient}. They used intermediate results of centrality computation to minimize the computation time. Bergamini et al. proposed a faster method to identify top-$k$ nodes in undirected networks by approximating the upper bound on closeness centrality using BFT \cite{bergaminicomputing}.

Wehmuth et al. studied the correlation of closeness centrality with the local neighborhood volume of the node \cite{wehmuth2012distributed}. The ranking based on local neighborhood volume is named as DACCER (Distributed Assessment of the Closeness CEntrality Ranking) and is highly correlated with closeness centrality ranking in both real world and synthetic networks.

Bader et al. proposed parallel algorithm to compute closeness centrality, where it executes a breadth first traversal (BFT) from each vertex as a root \cite{bader2006parallel}. Lehmann and Kaufmann proposed a method for decentralized computation of closeness centrality \cite{lehmann2003decentralized}. Wang et al. proposed a distributed algorithm that estimates closeness centrality with $91\%$ accuracy in terms of ordering on random geometric, Erd\H{o}s-R\'{e}nyi, and Barab\'{a}si-Albert graphs \cite{wang2015distributed}.

Closeness centrality has been applied to study various research topics like collaboration networks \cite{newman2001scientific, yan2009applying}, brain network \cite{sporns2007identification}, community detection \cite{jarukasemratana2014community}, identification of the community of a node by using the community information of other nodes \cite{zhang2012inferring}, closeness preferential attachment (CPN) model to generate synthetic networks \cite{ko2008rethinking}, and so on.

Currently, there is no work to estimate the rank of a node using its closeness centrality. In our previous works, we proposed methods to estimate the degree rank of a node using its local information. First, we proposed a method based on the power law degree distribution of scale-free networks and it computes the degree rank of a node in $O(1)$ time \cite{saxena2015rank, saxena2016estimating}. We further compute the variance in the rank estimation using power law degree distribution \cite{saxena2015estimating}. Next, we propose sampling based methods to estimate the degree rank in scale-free and random networks \cite{saxena2017degree}. These works focus on directly estimating the rank of a node based on the centrality measure. In the present work, we rank the nodes based on closeness centrality that itself is a global centrality measure.

\section{Datasets}\label{sec-datasets}

We have studied structural properties of the closeness centrality using real world network datasets that are briefly explained in Table \ref{datasets}.

\begin{table}[htp]
\centering
\caption{Datasets}
\label{datasets}
\begin{tabular}{|l|l|l|l|l|}
\hline
Network & Type & $\#$Nodes & $\#$Edges & Ref \\ \hline 
Brightkite & Social Network & 56739 & 212945 & \cite{cho2011friendship} \\ \hline
DBLP & Co-authorship Network & 317080 & 1049866 & \cite{yang2015defining} \\ \hline
Digg & Social Network & 261489 & 1536577 & \cite{hogg2012social} \\ \hline
Enron & Communication Network & 84384 & 295889 & \cite{klimt2004enron} \\ \hline
Epinion & Social Network & 75877 & 405739 & \cite{richardson2003trust} \\ \hline
Facebook & Social Network & 63392 & 816831 &  \cite{viswanath2009evolution} \\ \hline
Gowalla & Social Network & 196591 & 950327 & \cite{cho2011friendship} \\ \hline
Google+ & Social Network & 107614 & 12238285 & \cite{mcauley2012learning} \\ \hline
Slashdot & Social Network & 82168 & 504230 &  \cite{leskovec2010signed} \\ \hline
Twitter & Social Network & 81306 & 1342296 & \cite{ mcauley2012learning} \\ \hline
\end{tabular}
\end{table}

\section{Terminologies}\label{sec-notation}

Table \ref{notation} explains the terminologies used in the paper. $G(V, E)$ represents a network where $V$ is the set of nodes and $E$ is the set of edges.

\begin{table}[htp]
\centering
\caption{Terminologies for the paper}
\label{notation}
\resizebox{\columnwidth}{!}{%
\begin{tabular}{|l|l|}
\hline
\textbf{Notation} & \textbf{Description} \\ \hline
$n$ & Total number of nodes in the network \\ \hline
$m$ & Total number of edges in the network \\ \hline
$u,v,w$ & Nodes in the network  \\ \hline
$deg(u)$  & Degree of node $u$  \\ \hline
$C(u)$  & Closeness centrality of node $u$  \\ \hline
$c_{max}$  & Maximum closeness centrality in the network  \\ \hline
$c_{min}$  & Minimum closeness centrality in the network  \\ \hline
$c_{mid}$  & Closeness centrality of middle ranked node in the network  \\ \hline
$c'_{max}$  & Estimated maximum closeness centrality in the network  \\ \hline
$c'_{min}$  & Estimated minimum closeness centrality in the network  \\ \hline
$c'_{mid}$  & Estimated closeness centrality of middle ranked node \\ \hline
$R_{rev}(u)$  & Reverse rank of node $u$ in the network \\ \hline
$R_{act}(u)$  & Actual rank of node $u$ in the network \\ \hline
$R_{est}(u)$  & Estimated rank of node $u$ in the network  \\ \hline
\end{tabular}
}
\end{table}

\section{Closeness Centrality Behavior}\label{sec-ccb}


In this section, we study the behavioral characteristics of closeness centrality and their correlation with the network structure.

\begin{figure*}[htp]
  \centering
  \subcaptionbox{Brightkite}[.45\linewidth][c]{%
    \includegraphics[width=.45\linewidth]{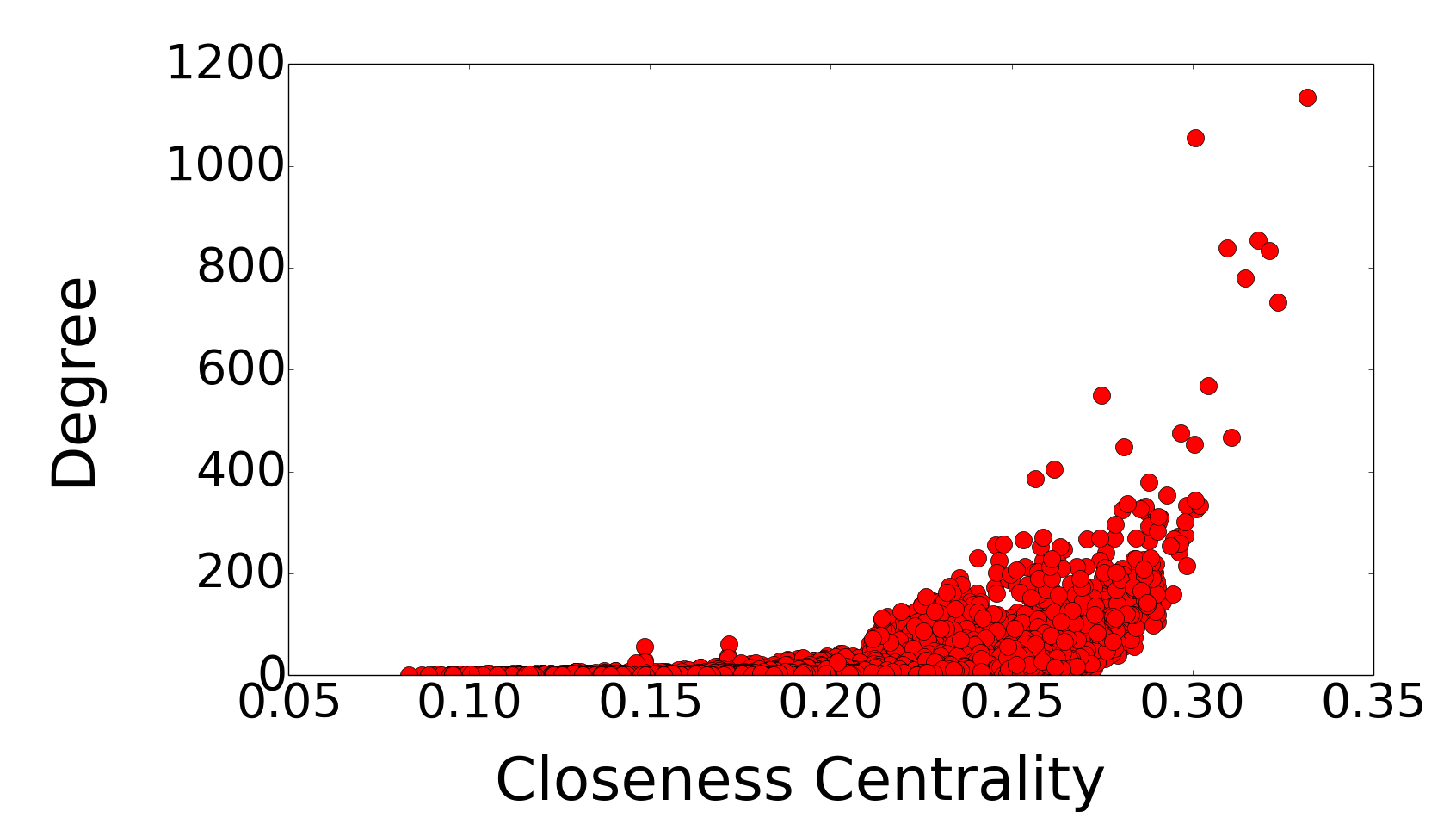}}\quad
  \subcaptionbox{DBLP}[.45\linewidth][c]{%
    \includegraphics[width=.45\linewidth]{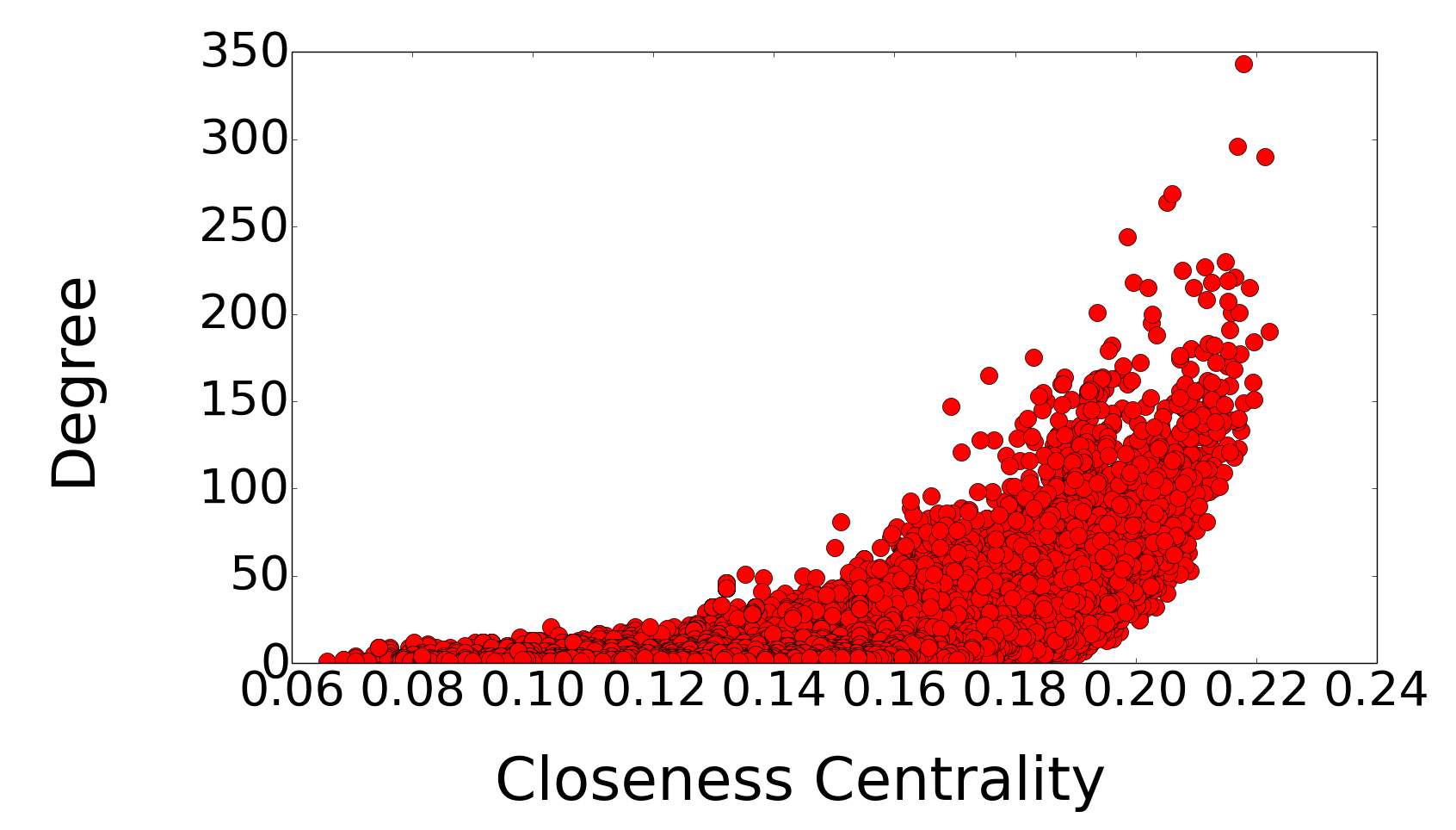}}\quad
  \subcaptionbox{Digg}[.45\linewidth][c]{%
    \includegraphics[width=.45\linewidth]{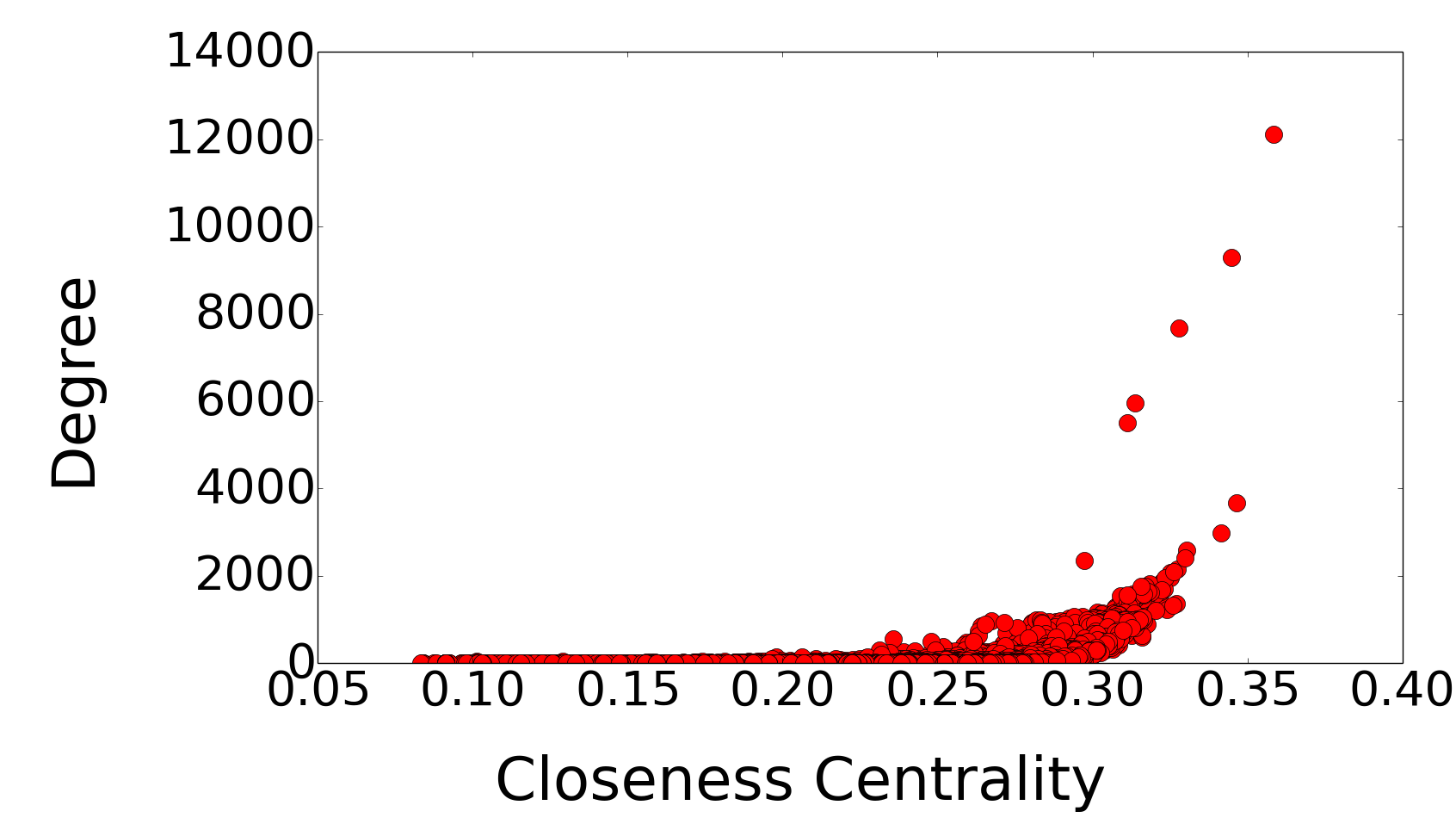}}\quad
  \subcaptionbox{Enron}[.45\linewidth][c]{%
    \includegraphics[width=.45\linewidth]{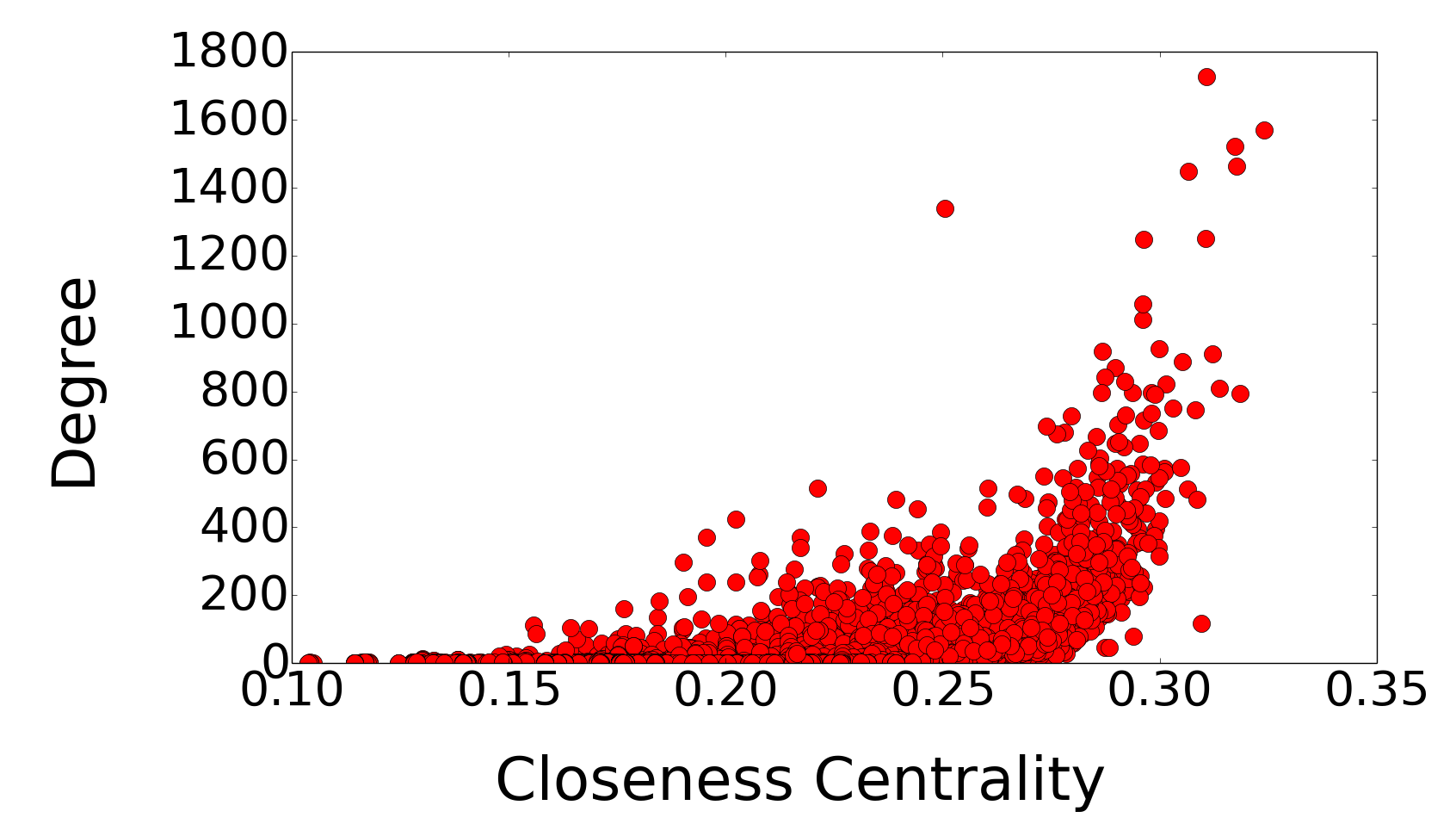}}\quad
  \subcaptionbox{Epinion}[.45\linewidth][c]{%
    \includegraphics[width=.45\linewidth]{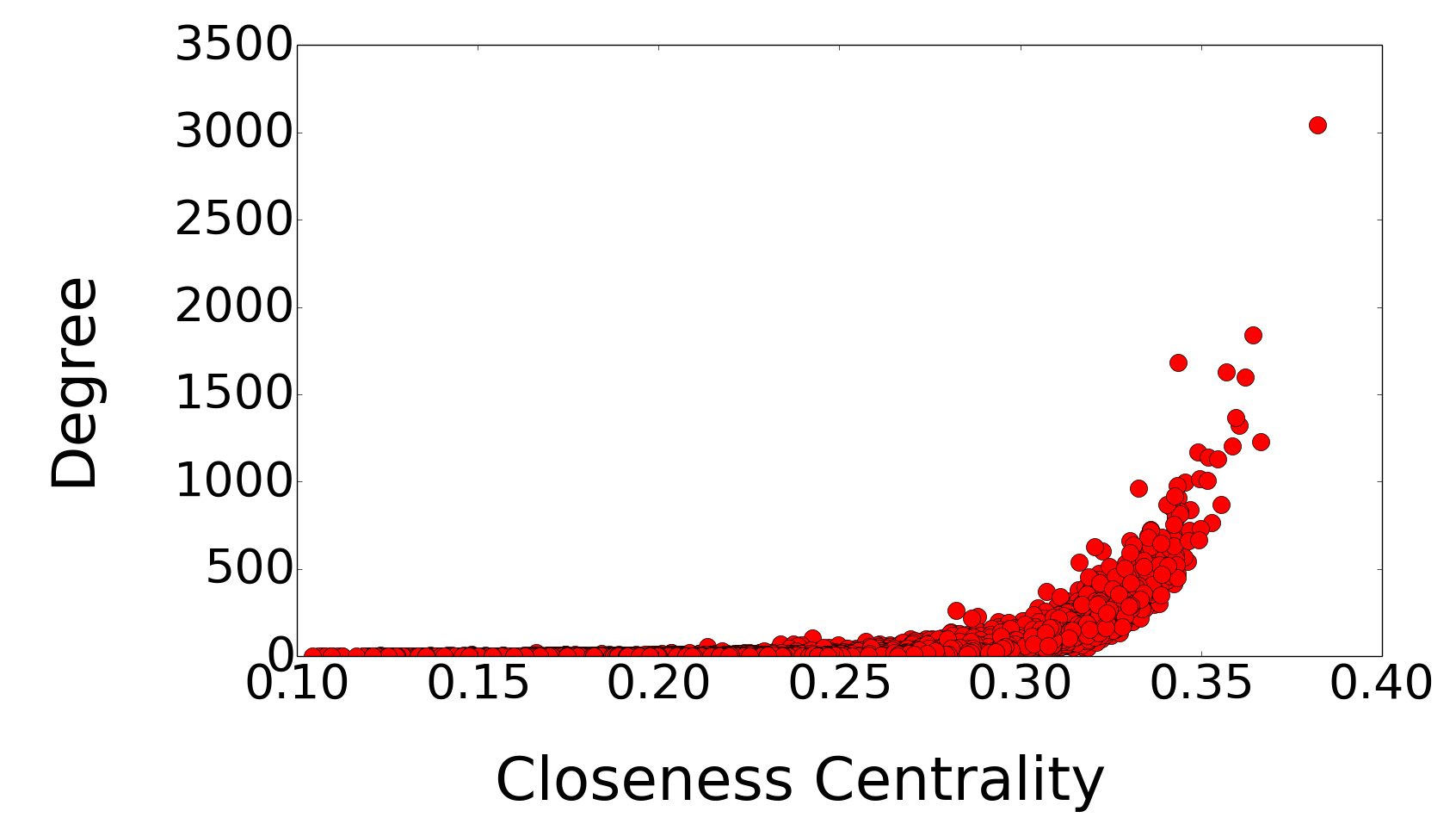}}\quad
  \subcaptionbox{Facebook}[.45\linewidth][c]{%
    \includegraphics[width=.45\linewidth]{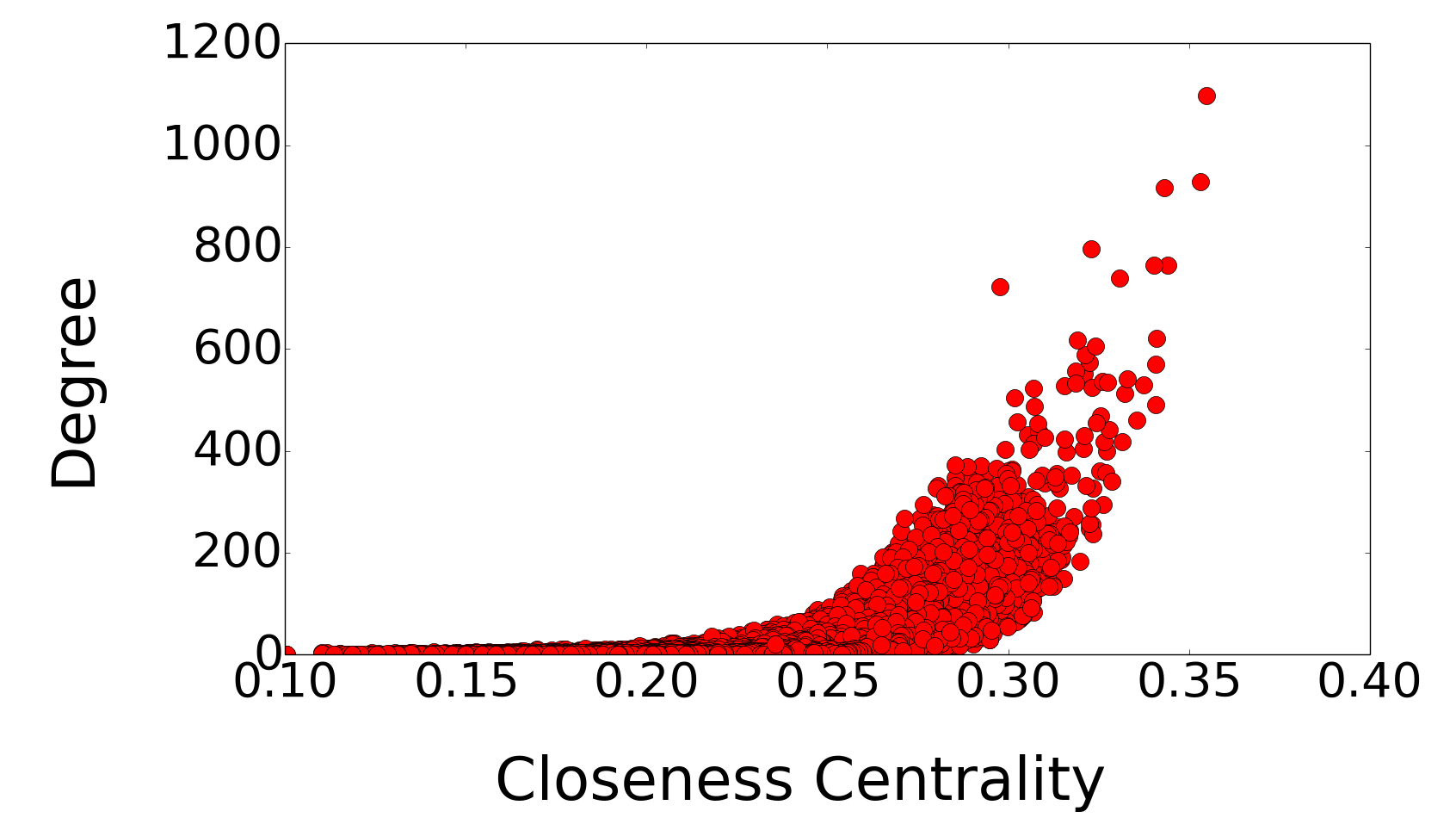}}\quad
  \caption{Degree versus Closeness Centrality}
  \label{fig-deg}
\end{figure*}

\subsection{Closeness Centrality vs. Degree Centrality}\label{sec-closedeg}

First, we study the correlation between closeness centrality and degree of the nodes. Results do not show that the closeness centrality is correlated with the degree, but they show that the node having the highest degree either has the highest closeness centrality or it is very close to the highest closeness centrality as shown in Figure~\ref{fig-deg}. This is desirable since the degree of a node is a local characteristic and it can be computed in $O(1)$ time. This information can be used to identify the node having highest closeness centrality. Further details are explained in Section~\ref{sec-sam}.

\subsection{Closeness Centrality Pattern from Center to Periphery}\label{sec-bft}

In scale-free networks, central nodes have the highest closeness centrality and the extreme periphery nodes have minimum closeness centrality. As we move from the center to the periphery closeness centrality of the nodes decreases. To analyze this pattern in depth, we execute breadth first traversal from the central node (having the highest closeness centrality) until all nodes are traversed. The plots of closeness centrality of nodes versus their distance from the center node are shown in Figure~\ref{fig-bft}. Results show that the nodes farthest from the central nodes have minimum closeness centrality, and the outermost level of the BFT (also referred as the outermost periphery) is very sparse. This behavior of the closeness centrality can be used to find the minimum closeness centrality nodes in the network, by looking at the diametrically opposed nodes from the central ones. 


\subsection{Closeness Centrality vs. Reverse Ranking}\label{sec-revrank}

Real world networks have a dense central region that contains a very few number of nodes having the highest closeness centrality. These central nodes are highly connected with each other and also with rest of the network. The nodes belonging to the outermost peripheral layers have the least closeness values. Closeness centrality of all other nodes lies between this range and increases sharply as we move from the periphery to the center.

Due to this behavior, the reverse ranking versus closeness centrality of nodes follows a sigmoid curve as shown in Figure~\ref{fig:sorted}.
In reverse ranking, a node having highest closeness value will have the smallest rank $n$ (where $n$ is the total number of nodes) and the node having the lowest closeness value will have the rank $1$.
We plot reverse rank versus closeness centrality for more than $20$ real world social networks and find that they follow the sigmoid curve and it is symmetric in most of these networks. The results are displayed in Figure~\ref{figsort} for some of the datasets, where the $x$-axis represents closeness centrality and the $y$-axis represents the reverse ranking of the nodes.

We study this curve in depth and find that the $4$-parameter logistic equation can better fit the curve. The equation of the $4$-parameter logistic is defined as,

\begin{center}
\begin{equation}\label{reverserank}
R_{rev}(u) = n + \frac{1-n}{1+\left( \frac{C(u)}{c_{mid}}\right) ^p},
\end{equation}
\end{center}
where, $c_{mid}$ represents closeness centrality of the middle ranked node in the network, $n$ represents total number of nodes, and $p$ denotes slope of the logistic curve at the middle point (also called hill's slope). All parameters are displayed in Figure~\ref{fig:sorted}. We will use this logistic equation to estimate the closeness rank of a node as explained in the next section.

\section{The Heuristic Method for Closeness Ranking}\label{sec-sam}

In this section, we propose a method to estimate the closeness rank of a node. The node of interest whose rank we compute is mentioned as \textit{interested node}. The proposed method exploits the structural characteristics of the closeness centrality to efficiently estimate the ranking. As we discussed, the reverse rank versus closeness centrality follows a sigmoid curve, we use this characteristic to estimate the rank of a node.
Once we estimate both parameters of the logistic Equation~\ref{reverserank}, the closeness rank of a node can be estimated in $O(1)$ time.

We will now discuss methods to estimate both of these parameters: ($1$) closeness centrality of the middle-ranked node ($c_{mid}$) and ($2$) slope of the logistic curve ($p$). Next, we will discuss the method to estimate closeness rank of a node and its complexity.

\begin{figure*}[htp]
  \centering
  \subcaptionbox{Brightkite}[.45\linewidth][c]{%
    \includegraphics[width=.45\linewidth]{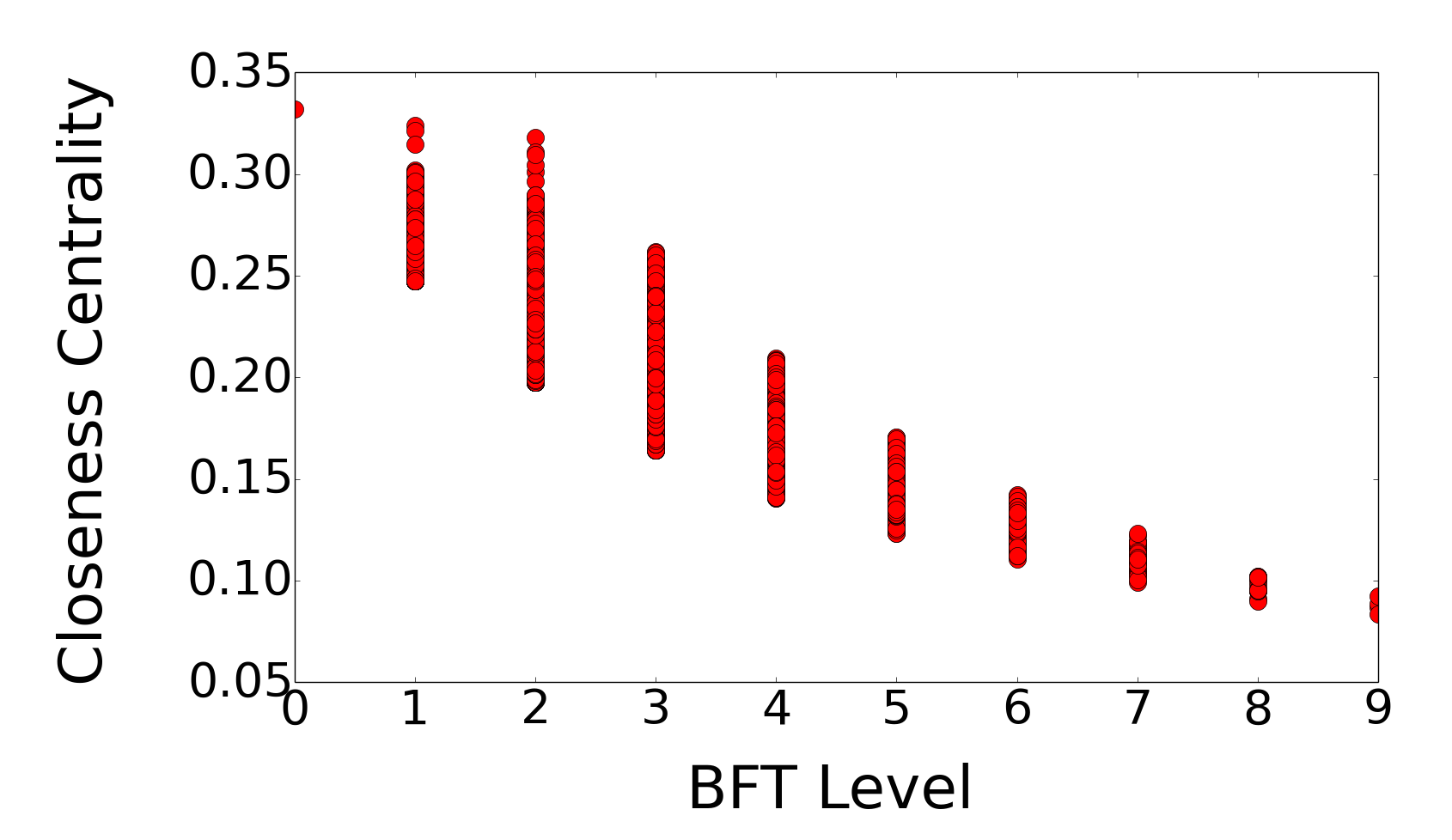}}\quad
  \subcaptionbox{DBLP}[.45\linewidth][c]{%
    \includegraphics[width=.45\linewidth]{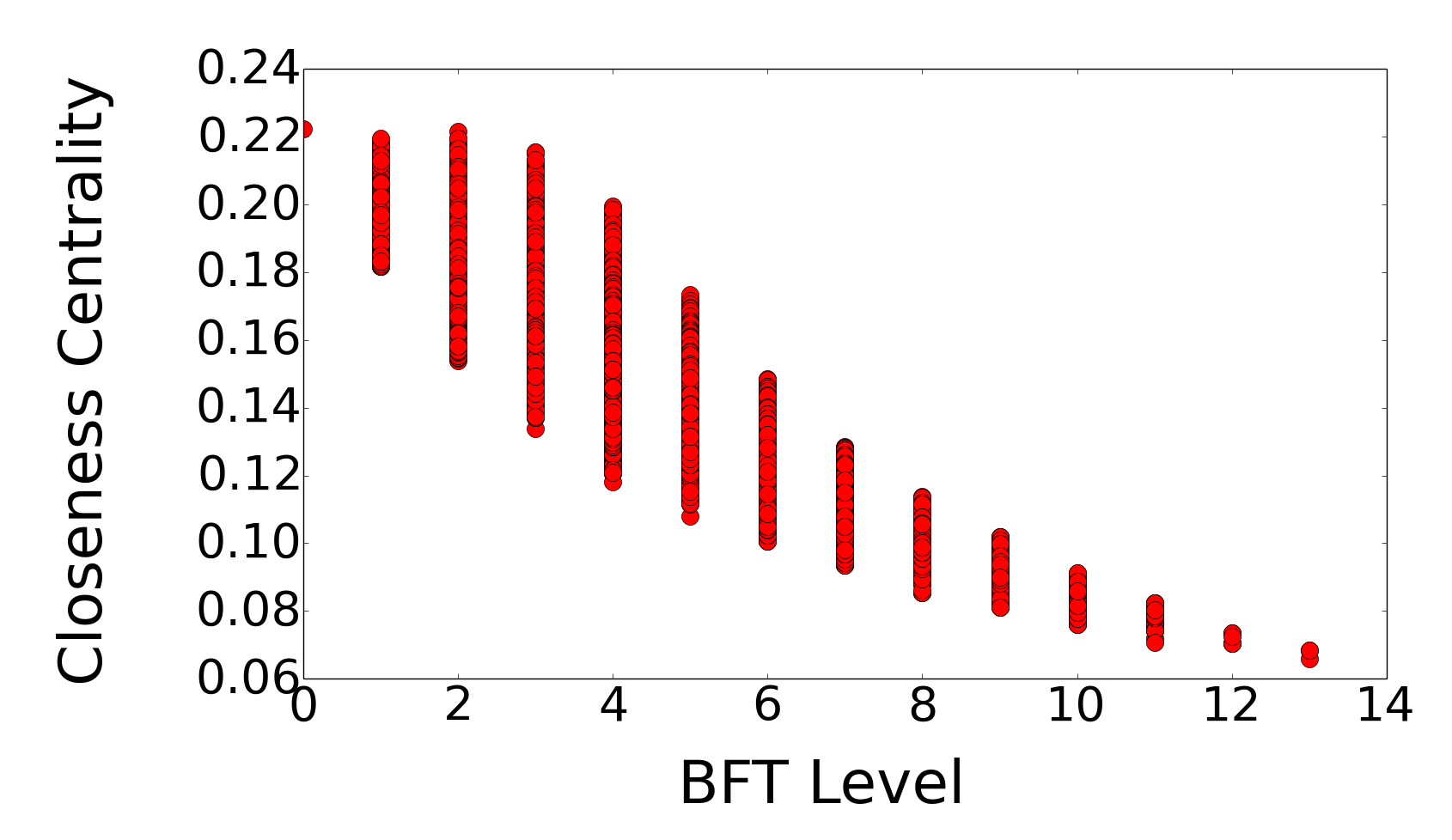}}\quad
  \subcaptionbox{Digg}[.45\linewidth][c]{%
    \includegraphics[width=.45\linewidth]{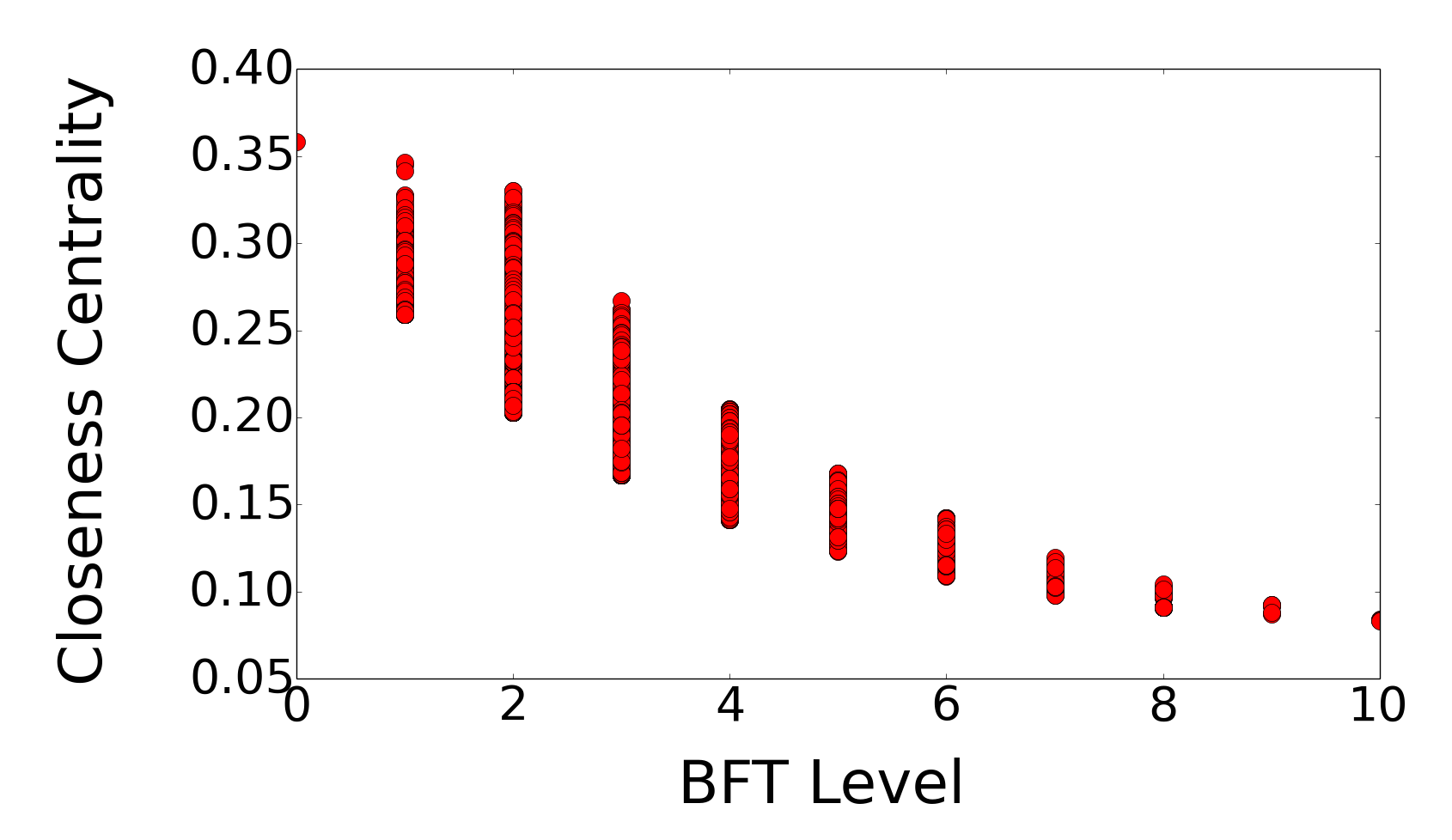}}\quad
  \subcaptionbox{Enron}[.45\linewidth][c]{%
    \includegraphics[width=.45\linewidth]{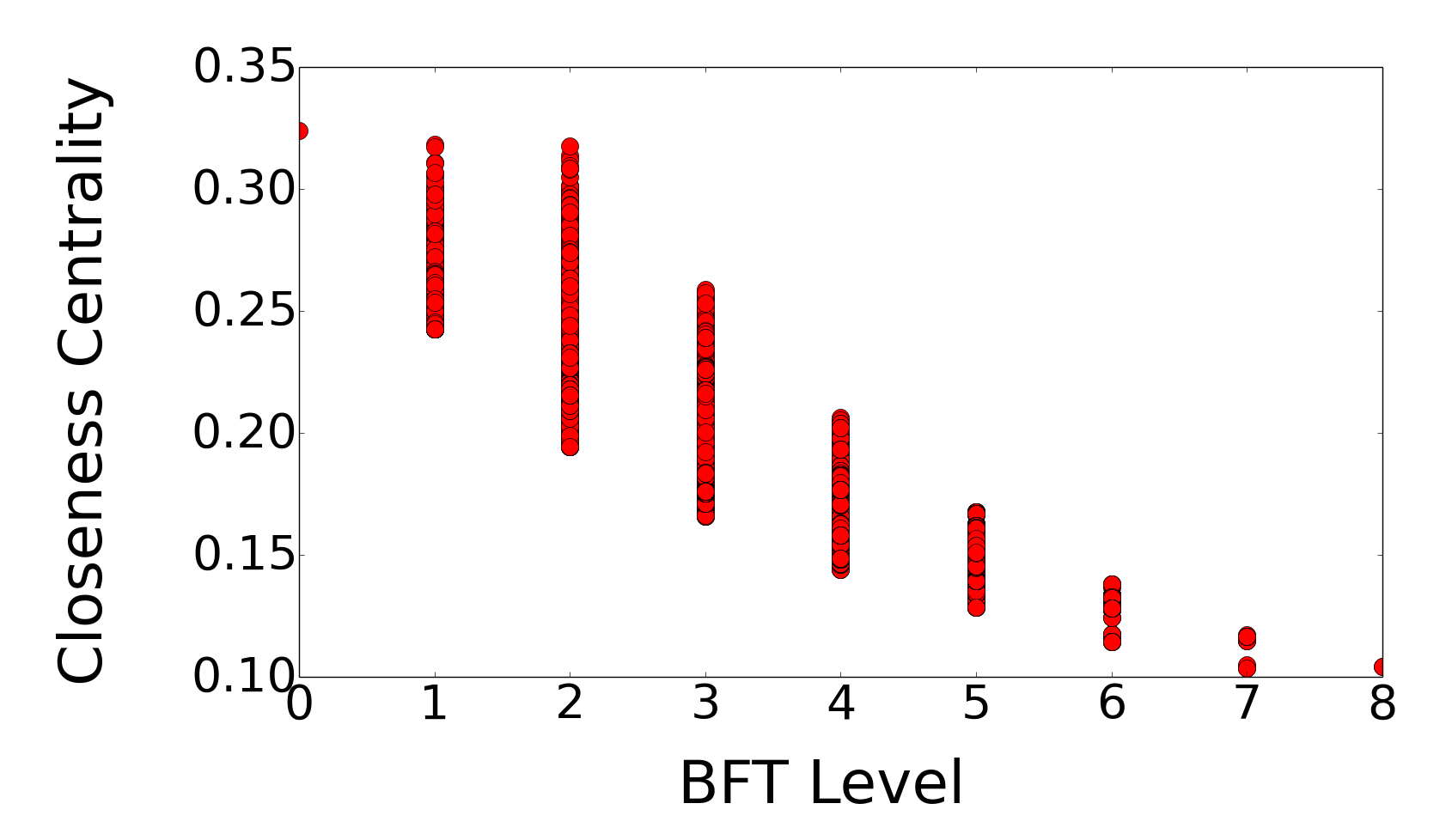}}\quad
  \subcaptionbox{Epinion}[.45\linewidth][c]{%
    \includegraphics[width=.45\linewidth]{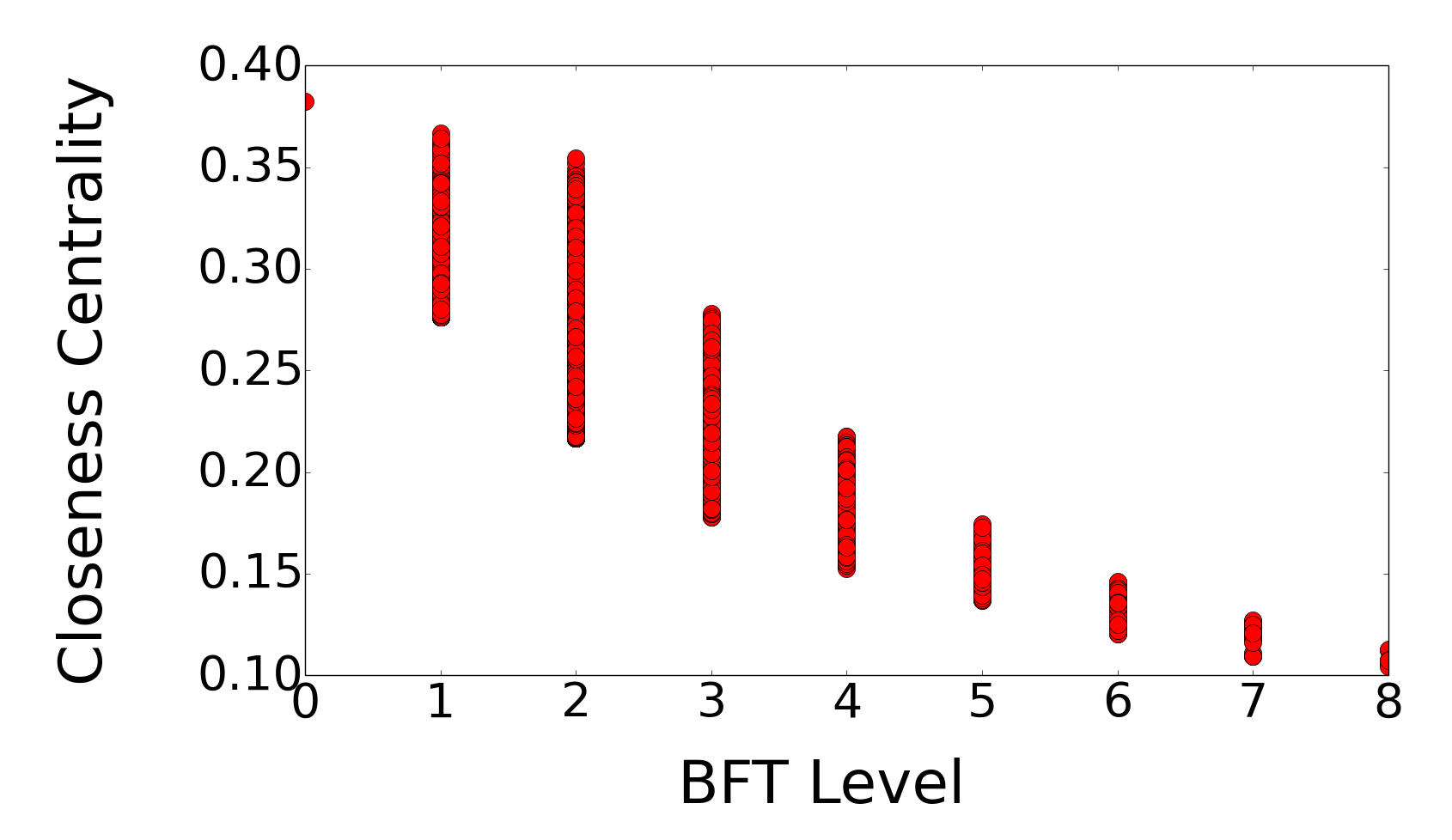}}\quad
  \subcaptionbox{Facebook}[.45\linewidth][c]{%
    \includegraphics[width=.45\linewidth]{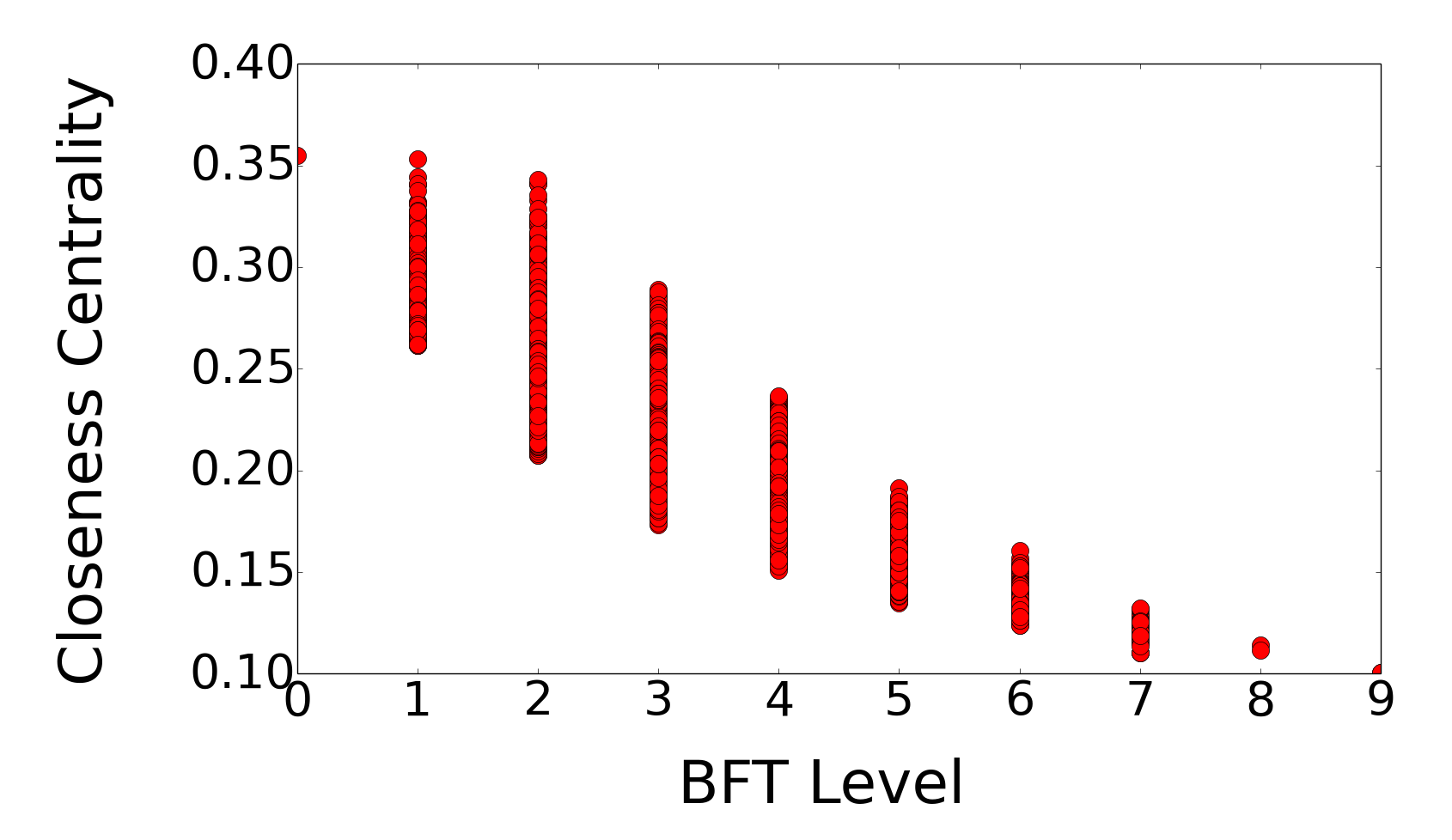}}\quad
  \caption{Closeness Centrality versus BFT Level from the Central Node}
  \label{fig-bft}
\end{figure*}

\subsection{Estimate closeness centrality of middle ranked node $(c_{mid})$}

We observed that in most of the real world networks, reverse rank versus closeness centrality follows a symmetric sigmoid pattern. The plots for some of these networks are shown in Figure~\ref{figsort}, where the plots $(a)-(d)$ are symmetric. We use this information to compute the value of $c_{mid}$, using the network.
 
Using the property \ref{sec-closedeg}, the maximum closeness centrality can be estimated as, $c'_{max}=C(u)$ where $deg(u) \geq deg(v), \forall v \in V$.
While estimating the closeness centrality of the \textit{interested node}, keep track of the node having the highest degree. Once the highest degree node is known, we can compute its closeness centrality using the standard computation method. We thus have the following observation.

\begin{observation}\label{obs1}
The maximum closeness centrality can be estimated as, $c'_{max}=C(u)$ where $deg(u) \geq deg(v),  \forall v \in V(G)$.
\end{observation}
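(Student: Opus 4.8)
Observation~\ref{obs1} is best read as a structural heuristic rather than a worst-case guarantee: for general graphs the maximum-degree node need not be the most central one (take two large stars joined by a long path, which is of course far from scale-free, and the relative error can be made a constant). The plan is therefore twofold: (i) isolate the two properties of scale-free small-world networks that make $C(u)$ provably within a small relative factor of $c_{max}$, namely a small diameter and the highest-degree hub sitting inside the dense core, and (ii) back the estimate with the degree-versus-closeness plots of Section~\ref{sec-closedeg} (Figure~\ref{fig-deg}) across the datasets of Table~\ref{datasets}.

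Write $S(x) = \sum_{v} d(x,v)$, so $C(x) = (n-1)/S(x)$ and $c_{max} = (n-1)/\min_x S(x)$; let $v^\star$ attain $S(v^\star) = \min_x S(x)$. First I would argue that the hub $u$ incurs a near-minimal $S(u)$. Decomposing $S(x) = \sum_{k \ge 1} k\,|L_k(x)|$ over the BFS levels $L_k(x)$ rooted at $x$, the node $u$ maximizes the innermost level $|L_1(x)| = deg(x)$ by definition, and in a scale-free graph a hub is adjacent to disproportionately many further high-degree nodes, so $|L_2(u)|$ is also large; together with the small diameter $D$ this keeps the BFS mass from $u$ concentrated at small $k$ (crudely, $S(u) \le deg(u) + D(n-1-deg(u))$) and places $u$ near the core.

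Second, I would bound the gap to the true optimum. Because both $u$ and $v^\star$ lie in (or one hop from) the dense core, $d(u,v^\star) \le \delta$ for a small constant $\delta$, typically far below the average distance $\bar{d}$. The triangle inequality gives $|d(u,w) - d(v^\star,w)| \le d(u,v^\star) \le \delta$ for every $w$, hence $S(u) \le S(v^\star) + \delta(n-1)$, and since $S(v^\star) = \Theta(\bar{d}\,n)$,
\[
  C(u) \;\ge\; \frac{n-1}{S(v^\star) + \delta(n-1)} \;=\; \frac{c_{max}}{1 + \delta/\bar{d}},
\]
so $\bigl(c_{max} - C(u)\bigr)/c_{max} \le \delta/\bar{d}$, which is small exactly in the small-world regime --- this is the "equal to, or very close to, the maximum" behaviour of Figure~\ref{fig-deg}.

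The hard part is that the two operative hypotheses --- BFS mass from the hub concentrated at small $k$, and $d(u,v^\star) \ll \bar{d}$ --- are genuinely structural facts about scale-free social networks, not theorems provable for arbitrary graphs. Consequently the honest proof of Observation~\ref{obs1} consists of the heuristic bound above (which at least makes the assumptions explicit) together with the experimental verification on the ten networks of Table~\ref{datasets}; this is what licenses replacing $c_{max}$ by the $O(m)$-time estimate $c'_{max}=C(u)$, computed for the highest-degree node encountered while the interested node's closeness is being evaluated, in the ranking procedure developed in the following sections.
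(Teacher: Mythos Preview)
Your proposal is correct as a heuristic argument, but it is considerably more ambitious than what the paper actually does. In the paper this statement is labelled an \emph{Observation}, and it is justified purely empirically: Section~\ref{sec-closedeg} plots degree against closeness centrality for the datasets of Table~\ref{datasets} (Figure~\ref{fig-deg}) and simply notes that ``the node having the highest degree either has the highest closeness centrality or it is very close to the highest closeness centrality.'' No inequality, no triangle-inequality bound, no BFS-level decomposition appears; the paper treats the claim as a data-driven fact about the networks under study and moves on.

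What you add is a principled explanation of \emph{why} the empirical pattern should hold in the small-world, scale-free regime. The triangle-inequality step $S(u)\le S(v^\star)+\delta(n-1)$ and the resulting relative-error bound $\bigl(c_{max}-C(u)\bigr)/c_{max}\le \delta(n-1)/S(v^\star)$ are sound; identifying $S(v^\star)/(n-1)$ with $\Theta(\bar d)$ is a mild abuse (since $S(v^\star)$ is the minimum farness, it is at most $\bar d(n-1)$, so the clean ratio $\delta/\bar d$ is only a lower estimate of the true error term), but as you explicitly frame the whole thing as a heuristic this is harmless. Your counterexample of two stars joined by a long path is also a nice addition that the paper does not mention. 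In short: the paper's ``proof'' is the experimental plots alone, whereas you supply both the plots and a quantitative rationale --- the latter buys an understanding of when the estimator can be trusted, at the cost of introducing structural hypotheses that are themselves only empirically validated.
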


In a network, the nodes having the maximum distance from the central node have minimum closeness centrality as observed in figure \ref{fig-bft}. So, we can keep track of the nodes falling on the outermost level of BFT while computing the maximum closeness centrality.
Let $w$ be a node in the network chosen uniformly at random from all the nodes farthest away from $u$ (i.e. $d(u,w)$ is maximum) for $u$ identified as a central node in Observation~\ref{obs1}.
Using the property \ref{sec-bft}, the minimum closeness centrality can be estimated by the closeness centrality of $w$. We thus have the following observation.

\begin{observation}
The minimum closeness centrality can be estimated as, $c'_{min}=C(w)$, $\exists w$ where $d(w, u)$ is max, for $u$ identified in Observation~\ref{obs1}.
\end{observation}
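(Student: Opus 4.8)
The plan is to reduce the claim to the assertion that a node $w$ chosen at maximum distance from the central node $u$ of Observation~\ref{obs1} has an essentially maximal total distance $\sum_{v} d(w,v)$; since $C(w) = (n-1)/\sum_{v} d(w,v)$, this is exactly the statement that $w$ has essentially the minimum closeness centrality, so $c'_{min}=C(w)$. I would not attempt a worst-case graph-theoretic argument — the statement fails for arbitrary graphs — but instead lean on the core-periphery structure of scale-free networks recorded in Section~\ref{sec-bft}.

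First I would isolate the two structural inputs already available. (i) By Observation~\ref{obs1} together with Section~\ref{sec-closedeg}, the highest-degree node $u$ lies in the dense core and has (near) maximum closeness, so $\sum_{v} d(u,v)$ is (near) minimal among all nodes. (ii) Running BFT from $u$ and partitioning $V$ into layers $L_0=\{u\}, L_1, \dots, L_k$ with $k=\max_{v} d(u,v)$, the bulk of the vertices sits in the low-index layers while the outermost layer $L_k$ is sparse — precisely the behaviour plotted in Figure~\ref{fig-bft}. For the chosen $w\in L_k$ the triangle inequality through $u$ then gives, for every $v$, $d(w,v)\ge d(u,w)-d(u,v)=k-d(u,v)$, hence
\[
\sum_{v} d(w,v)\;\ge\; n\,k\;-\;\sum_{v} d(u,v),
\]
which is large because $k$ is the eccentricity of the core node while $\sum_{v} d(u,v)$ is small by (i).

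Next, for comparison, any other node $w'$ satisfies $d(w',v)\le d(w',u)+d(u,v)\le k+d(u,v)$, so $\sum_{v} d(w',v)\le n\,k+\sum_{v} d(u,v)$. Combining this upper bound with the monotone and sharp decay of closeness from center to periphery established in Section~\ref{sec-bft} — so that genuinely interior nodes have strictly smaller distance sums than the outermost ones — I would conclude that $w$ attains the minimum up to the heuristic approximation, which is the claimed $c'_{min}=C(w)$.

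The hard part is that the two triangle-inequality bounds above differ by $2\sum_{v} d(u,v)=O(n\bar d)$, where $\bar d$ is the (small) average distance from the core; since the diameter $k$ of a scale-free network is itself small, this slack is of the same order as the main term $n\,k$, so the inequalities alone do not pin down the minimizer. Closing this gap is exactly where the empirical structural input is indispensable: one needs the observed monotone, sharp center-to-periphery decay of closeness and the sparsity of the outermost BFT layer (Figure~\ref{fig-bft}) to rule out some non-peripheral node having an even larger distance sum. I would therefore present the statement as an empirically grounded heuristic, justified by the structural observations of Section~\ref{sec-bft}, and note that identifying the exact minimum-closeness node would require a full BFT from a periphery vertex — precisely the global computation the proposed method is designed to avoid.
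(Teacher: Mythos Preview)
Your proposal is sound as far as it goes, but it is much more elaborate than what the paper actually does. In the paper this statement is labelled an \emph{Observation}, not a proposition or lemma, and it carries no formal proof at all: the justification is the single sentence immediately preceding it, namely that ``the nodes having the maximum distance from the central node have minimum closeness centrality as observed in Figure~\ref{fig-bft}'', together with the reference to property~\ref{sec-bft}. That is, the paper treats the claim as a purely empirical fact read off the BFT-level plots, with no analytic argument whatsoever.

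Your triangle-inequality sandwich $n\,k-\sum_v d(u,v)\le \sum_v d(w,v)$ and $\sum_v d(w',v)\le n\,k+\sum_v d(u,v)$ is correct and is a genuine addition beyond the paper; you are also right that the $2\sum_v d(u,v)$ slack is of the same order as $n\,k$ in small-world networks, so the bounds alone cannot identify the minimizer. Since you explicitly fall back on the empirical core--periphery behaviour of Section~\ref{sec-bft} to close the gap, you end up in the same place as the paper, just with more scaffolding. So your approach is not wrong; it simply supplies more argument than the paper claims, and the paper's own ``proof'' is essentially the last paragraph of your proposal.
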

We now use Observations~$1$ and $2$ to estimate the closeness centrality of the middle ranked node.

\begin{proposition}
In the symmetric sigmoid curve of reverse rank versus closeness centrality, the closeness centrality of the middle-ranked node, $(c_{mid})$, can be computed as $c_{mid}=(c_{max}+c_{min})/2$.
\end{proposition}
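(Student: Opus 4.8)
The plan is to derive the formula straight from the geometry of a symmetric sigmoid, using only the defining roles of $c_{max}$, $c_{min}$ and $c_{mid}$ together with Observations 1 and 2.

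First I would fix coordinates on the curve of Figure~\ref{fig:sorted}: the horizontal axis carries $C(u)$ and the vertical axis carries the reverse rank $R_{rev}(u)$, which by construction equals $1$ at the node of smallest closeness $c_{min}$, equals $n$ at the node of largest closeness $c_{max}$, and --- this being exactly the definition of $c_{mid}$ recorded in Table~\ref{notation} --- equals the median value $\tfrac{n+1}{2}$ at $C(u)=c_{mid}$. Since $R_{rev}$ increases monotonically with $C(u)$, the curve meets each horizontal line $R_{rev}=r$ in a single point, so it is the graph of a strictly increasing function on $[c_{min},c_{max}]$; in particular its unique point at height $n$ is the right endpoint $(c_{max},n)$.

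Next I would make the symmetry hypothesis precise: a ``symmetric sigmoid'' is one invariant under the point reflection through its center, and for a sigmoid that center is the inflection point, whose height is the midpoint $\tfrac{n+1}{2}$ of the range of $R_{rev}$ and whose abscissa is therefore $c_{mid}$ by the previous step. Hence the map $(c,r)\mapsto(2c_{mid}-c,\;n+1-r)$ carries the curve onto itself. Applying it to the left endpoint $(c_{min},1)$ gives the point $(2c_{mid}-c_{min},\;n)$, which must also lie on the curve, so it coincides with $(c_{max},n)$; comparing first coordinates yields $2c_{mid}-c_{min}=c_{max}$, i.e. $c_{mid}=(c_{max}+c_{min})/2$. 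Combined with Observations 1 and 2 this gives the computable estimate $c'_{mid}=(c'_{max}+c'_{min})/2$.

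The main obstacle is not the algebra but pinning down the correct notion of symmetry. One must invoke the \emph{additive} point reflection in the $(C,R_{rev})$ plane that is visible in the symmetric plots $(a)$--$(d)$ of Figure~\ref{figsort}, rather than the \emph{multiplicative} symmetry $C(u)\mapsto c_{mid}^2/C(u)$ that is automatically hard-wired into the four-parameter logistic of Equation~\ref{reverserank} and would instead force the geometric mean $\sqrt{c_{max}\,c_{min}}$. I would therefore state at the outset that the proof assumes the empirically observed planar symmetry of the curve; given that assumption, the arithmetic-mean formula follows as above.
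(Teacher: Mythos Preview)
Your argument is essentially the paper's: both exploit the symmetry of the sigmoid to conclude that the horizontal midpoint is the arithmetic mean of the endpoints, though you formalize the point-reflection where the paper simply reads the distances $c_{max}-c_{min}$ and $(c_{max}-c_{min})/2$ off Figure~\ref{fig:sorted}. Your closing remark on additive versus multiplicative symmetry is a real subtlety the paper does not raise---the logistic of Equation~\ref{reverserank} is symmetric under $C(u)\mapsto c_{mid}^2/C(u)$, which would give the geometric mean---so you are right that the arithmetic-mean formula rests on the empirically observed planar symmetry of Figure~\ref{figsort} rather than on the fitted model itself.
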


\begin{proof}
If the sigmoid curve is symmetric, then using Figure~\ref{fig:sorted} we note that:
The distance from $C$ to $A$ = $c_{max} - c_{min}$ \\
The distance from $A$ to $B$ = $(c_{max} - c_{min})/2$ \\
The distance of $B$ from the origin point can be computed as,
\begin{equation}
c_{mid} = c_{min} + (c_{max}-c_{min})/2 = (c_{min}+c_{max})/2,
\end{equation}
as desired.~\end{proof}

\subsection{Estimate Slope of the Sigmoid Curve ($p$)}

We measured the slope of the sigmoid curve for $20$ real world networks using scaled levenberg-marquardt algorithm \cite{more1978levenberg} with 1000 iterations and 0.0001 tolerance. We observed that the slope ranges from 10-15. The slope for the discussed datasets is shown in Table~\ref{pvalue}. The average of these values is used as the value for $p$ in the simulation. We empirically observed that the slight variation in the estimation of $p$ does not cause more error in the ranking.

\begin{table}[h]
\centering
\caption{Networks versus their p values}
\label{pvalue}
\begin{tabular}{|l|l|l|l|}
\hline
Network  & p value & Network  & p value \\ \hline
Brightkit & 12.18   & Gowalla  & 10.79   \\ \hline
DBLP     & 14.11   & Google+    & 15.95   \\ \hline
Digg     & 14.23   & Facebook & 12.74   \\ \hline
Enron    & 11.47   & Twitter  & 14.47   \\ \hline
Epinion  & 12.99   & Slashdot & 14.89   \\ \hline
         &         & \textbf{Average}  & 13.38   \\ \hline
\end{tabular}
\end{table}


\subsection{Estimate Closeness Rank}



After estimating all the needed parameters for the sigmoid curve, the closeness rank of the \textit{interested node} $u$ can be estimated using Proposition~\ref{rank}.

\begin{proposition}\label{rank}
In a network $G$, the closeness rank of a node $u$ can be computed as, $R_{act}(u)=1 + \frac{n - 1}{1+\left(\frac{C_C(u)}{c_{mid}}\right)^p}$.
\end{proposition}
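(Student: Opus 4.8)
The plan is to obtain the claimed closed form directly from the logistic fit of Equation~\ref{reverserank} together with the elementary relationship between a node's rank and its reverse rank. First I would recall the definition of reverse ranking from Section~\ref{sec-revrank}: the node with the highest closeness value receives reverse rank $n$ and actual rank $1$, the node with the lowest closeness value receives reverse rank $1$ and actual rank $n$, and, more generally, reversing the order of a list of $n$ items sends the item in position $k$ to position $n+1-k$. Hence, for every node $u$ (with ties broken consistently so that ranks run over $1,\dots,n$, as elsewhere in the paper), we have the exact identity $R_{act}(u) = n + 1 - R_{rev}(u)$.

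The next step is a purely algebraic substitution of the $4$-parameter logistic expression for $R_{rev}(u)$ from Equation~\ref{reverserank} into this identity:
\begin{equation}
R_{act}(u) = n + 1 - R_{rev}(u) = 1 - \frac{1-n}{1+\left(\frac{C(u)}{c_{mid}}\right)^p} = 1 + \frac{n-1}{1+\left(\frac{C(u)}{c_{mid}}\right)^p},
\end{equation}
where I cancel the $n$ terms and then negate numerator and denominator. Writing $C_C(u)$ for the closeness centrality $C(u)$, this is exactly the asserted formula, and once $c_{mid}$ and $p$ have been estimated by the methods of the preceding subsections nothing further is required.

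The one point I expect to be the actual content, rather than the algebra, is the status of Equation~\ref{reverserank} itself: it is an empirical/heuristic fit (the symmetric sigmoid observed across the $20$ real-world networks), not an exact combinatorial identity. I would therefore state at the outset that we \emph{adopt} Equation~\ref{reverserank} as the model for $R_{rev}$, and observe that the transformation $R_{act} = n+1-R_{rev}$ is exact and error-free, so any fitting error in the logistic model for $R_{rev}$ carries over unchanged to $R_{act}$ with the same parameters. With that caveat made explicit, the proposition follows immediately from the substitution above, so the bulk of the write-up should be devoted to stating the modeling assumption cleanly rather than to computation.
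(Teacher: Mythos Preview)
Your proposal is correct and follows essentially the same route as the paper: invoke the logistic model of Equation~\ref{reverserank} for $R_{rev}(u)$, apply the identity $R_{act}(u)=n+1-R_{rev}(u)$, and simplify. Your added remark that Equation~\ref{reverserank} is an empirical modeling assumption rather than an exact identity is a useful clarification, but the argument itself matches the paper's proof.
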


\begin{proof}

Using Equation \ref{reverserank}, the reverse rank of a node $u$ can be computed as,
\begin{center}
$R_{rev}(u)= n + \frac{1-n}{1+\left(\frac{C(u)}{c_{mid}}\right)^p}.$
\end{center}

The actual rank of a node can be estimated by subtracting its reverse rank from the total number of nodes plus $1$. We thus have that,

\begin{align*}
R_{act}(u)=n-R_{rev}(u)+1\\
R_{act}(u)=n- n - \frac{1-n}{1+\left(\frac{C(u)}{c_{mid}}\right)^p}+1,\\
\end{align*}
\begin{equation}\label{closerank}
R_{act}(u)=1 + \frac{n-1}{1+(\frac{C(u)}{c_{mid}})^p},
\end{equation}
as desired.~\end{proof}

Thus, we can now estimate the rank of a nodes in a network $G$ using Corollary~\ref{cor1}.

\begin{corollary}\label{cor1}
In a network $G$, the closeness rank of a node $u$ can be estimated as, $R_{est}(u)=1+\frac{n-1}{1+\left(\frac{C(u)}{c'_{mid}}\right)^{p'}}$, where $c'_{mid}$ and $p'$ are the estimated values of closeness centrality of middle ranked node and slope of the sigmoidal closeness centrality curve respectively.
\end{corollary}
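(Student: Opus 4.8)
The plan is to obtain the estimator by taking the exact closed form supplied by Proposition~\ref{rank} and replacing every quantity in it that depends on global network information with a proxy that can be extracted from a single breadth-first traversal. I would first recall that Proposition~\ref{rank} already establishes, via Equation~\ref{closerank}, the identity
\[
R_{act}(u)=1+\frac{n-1}{1+\left(\frac{C(u)}{c_{mid}}\right)^{p}},
\]
in which $n$ and $C(u)$ are known once we run one BFT from $u$, so the only obstructions to an $O(1)$ evaluation are the two fitted parameters $c_{mid}$ and $p$. The whole content of the corollary is therefore to certify that these two parameters admit inexpensive estimates $c'_{mid}$ and $p'$, and that plugging them into the formula above is legitimate.

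For $c'_{mid}$ I would chain the three preceding results: Observation~\ref{obs1} gives $c'_{max}=C(u^{*})$ with $u^{*}$ a maximum-degree node (which is tracked at no extra cost during the BFT used to compute $C(u)$); Observation~2 gives $c'_{min}=C(w)$ for $w$ a node at maximum distance from $u^{*}$, again read off the same traversal; and Proposition~1, valid because the reverse-rank versus closeness curve is a symmetric sigmoid, then licenses $c'_{mid}=(c'_{max}+c'_{min})/2$. For $p'$ I would appeal to the empirical observation summarized in Table~\ref{pvalue}, namely that the hill slope is essentially network-independent across the scale-free social networks of Table~\ref{datasets}, so that the fixed average $p'\approx 13.38$ is a stable estimate whose small perturbations do not materially affect the ranking. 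Substituting $c'_{mid}$ for $c_{mid}$ and $p'$ for $p$ in the displayed identity yields exactly
\[
R_{est}(u)=1+\frac{n-1}{1+\left(\frac{C(u)}{c'_{mid}}\right)^{p'}},
\]
which is the asserted expression.

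I expect the algebraic step to be immediate; the delicate point is that $R_{est}$ is an \emph{estimate} rather than an identity, so I would spend the remainder of the argument justifying the substitution. Since the right-hand side is a smooth, bounded function of $(c_{mid},p)$, a first-order sensitivity bound shows that $|R_{est}(u)-R_{act}(u)|$ is controlled by $|c'_{mid}-c_{mid}|$ and $|p'-p|$, with the derivative in $c_{mid}$ largest near the inflection region and decaying toward the two plateaus where the rank is already pinned near $1$ or near $n$. The main obstacle is thus not the derivation but bounding this accumulated error rigorously; consistent with the heuristic nature of the method, I would close by noting that the quantitative control of this error is carried out empirically in Section~\ref{sec-results} through the absolute and weighted error functions.
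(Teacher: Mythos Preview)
Your proposal is correct and follows the same route as the paper: the corollary is obtained from Proposition~\ref{rank} by direct substitution of the estimated parameters $c'_{mid}$ and $p'$ for $c_{mid}$ and $p$. In fact the paper offers no separate proof of Corollary~\ref{cor1} at all, treating it as immediate; your added discussion of where $c'_{mid}$ and $p'$ come from and the first-order sensitivity remark go beyond what the paper provides, but this extra material simply recapitulates the surrounding Sections~\ref{sec-sam} and~\ref{sec-results} rather than introducing a different argument. One small wording issue: computing $c'_{max}=C(u^{*})$ and $c'_{min}=C(w)$ each requires its own BFT (as in Algorithm~1), not the single traversal from $u$; only the \emph{identification} of $u^{*}$ is free during that first traversal.
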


The combined method to estimate the closeness rank of a node is explained in Algorithm~$1$. 
Here, $closeness\_centrality(G,u)$ method returns the closeness centrality of node $u$. $closeness\_centrality1(G,u)$ method returns the closeness centrality $C(u)$ of node $u$, the node $w$ having the highest degree in the network, and the network size $n$. $closeness\_centrality2(G,w)$ method returns the closeness centrality of node $w$, and list of the nodes having maximum distance from the node $w$. $random\_choice(cmin\_list)$ function returns a value uniformly at random from the given list $cmin\_list$. $closeness\_centrality(G,u)$, $closeness\_centrality1(G,u)$, and $closeness\_cent$- $rality2(G,u)$ methods can be implemented by modifying the BFT algorithm as they just need to keep track of few variables.



\begin{algorithm}[]
\caption{$EstimateClosenessRank(G,u,p)$}
$(C(u), w, n)= closeness\_centrality1(G,u)$\;
$(c'_{max}, cmin\_list) = closeness\_centrality2 (G, w)$\;
$c'_{min}= closeness\_centrality(G, random\_choice(cmin\_list))$\;
$c'_{mid}=(c'_{max}+c'_{min})/2$\;
Estimate closeness rank of the node using equation \ref{closerank} as, $R_{est}(u)=1 + \frac{n-1}{1+\left(\frac{C(u)}{c'_{mid}}\right)^p}$\;
Return $R_{est}(u)$\;
\end{algorithm}

\subsection{Complexity Analysis}
In this section, we will discuss the time complexity of the proposed heuristic method that is explained in Algorithm~$1$. The complexity of step 1 is $O(m)$ as it executes one BFT and keeps track of the highest degree node while executing the BFT. The time complexity of step 2 is $O(m)$ as it executes BFT from the node $w$ and returns the list of nodes that are traversed during the last level of BFT. The time complexity of step 3 is $O(m)$, as we assume that $random\_choice(cmin\_list)$ function returns a value in constant time as the size of the list is very small. Step 4 and 5 take $O(1)$ time. So, the overall complexity of the proposed method is $O(m) + O(m) + O(m) + 2\cdot O(1) = O(m)$. This is a great improvement over the classical ranking method that takes $O(n \cdot m)$ time.  


\section{The Randomized Heuristic Method}\label{sec-ram}

We observed that the accuracy of the estimated rank highly depends on the accuracy of the $c_{mid}$ estimator. So, we propose an improved $c_{mid}$ estimator to increase the accuracy of the proposed method.

In the heuristic method, we assumed that the sigmoid curve is symmetric. But in some real world networks, the sigmoid curve might not be symmetric, but very close as seen in Figure~\ref{figsort} $(e),(f)$, and the heuristic method will give a huge error for such cases. So, we propose an improved randomized method that uses uniformly random samples to estimate the value of $c_{mid}$. The improved method picks $k$ nodes uniformly at random and computes their closeness centrality values. The average of these $k$ closeness values is used as the estimated value of $c_{mid}$. Results show that the estimated value of $c_{mid}$ is very close to its actual value.
Further details are explained in the Results section. The complete algorithm is explained in Algorithm~$2$.


\begin{algorithm}[]
\caption{$RandomizedClosenessRank(G,u,p,k)$}
$(C(u), n)= closeness\_centrality3(G,u)$\;
Take a list $L$\;
\For{$i\leftarrow 1$ \KwTo $k$}{
Select a node $w$ uniformly at random\;
Add $closeness\_centrality(G,w)$ in list L\;
}
$c'_{mid}$ = average of all values of $L$\;
Estimate closeness rank of the node using equation \ref{closerank} as, $R_{est}(u)=1 + \frac{n-1}{1+\left(\frac{C(u)}{c'_{mid}}\right)^p}$\;
Return $R_{est}(u)$\;
\end{algorithm}

\subsection{Complexity Analysis}

In $RandomizedClosenessRank(G,u,p,k)$ algorithm, the complexity of $close$- $ness\_centrality3(G,u)$ method is $O(m)$ as it returns the closeness centrality of node $u$ and the total number of nodes. In the \textit{for loop}, $k$ nodes are chosen uniformly at random and their closeness centrality is computed.
So, the complexity of the for loop is $O(k \cdot m)$. Complexity of step 6 and 7 is $O(1)$. Thus, the overall complexity of the proposed method is $ O(m) + O(k \cdot m) + O(1) = O((k+1)m)$. As $k << n$, the complexity of the proposed method is $O(m)$.

\begin{figure*}[htp]
  \centering
  \subcaptionbox{DBLP}[.45\linewidth][c]{%
    \includegraphics[width=.45\linewidth]{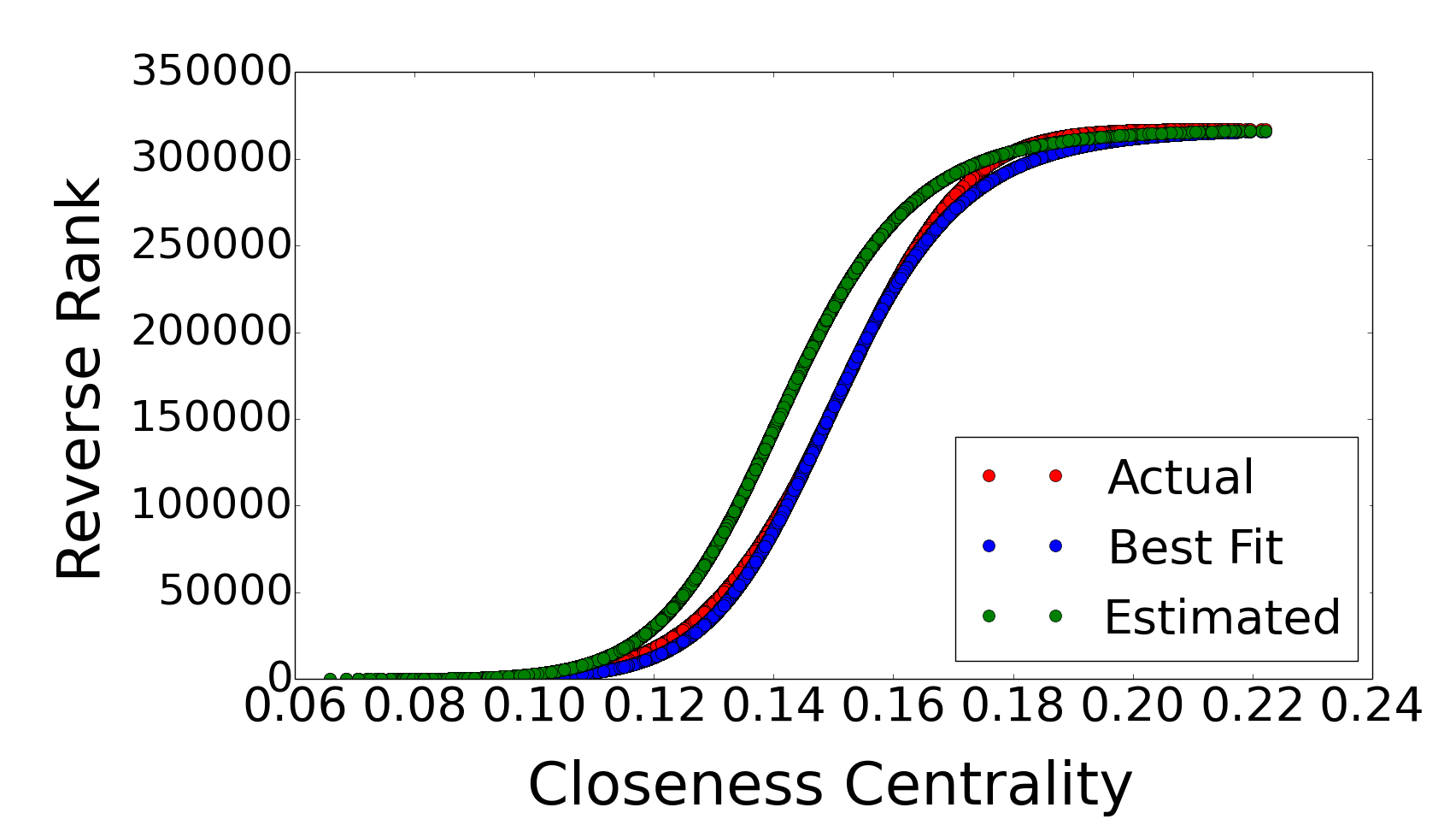}}\quad
  \subcaptionbox{Digg}[.45\linewidth][c]{%
    \includegraphics[width=.45\linewidth]{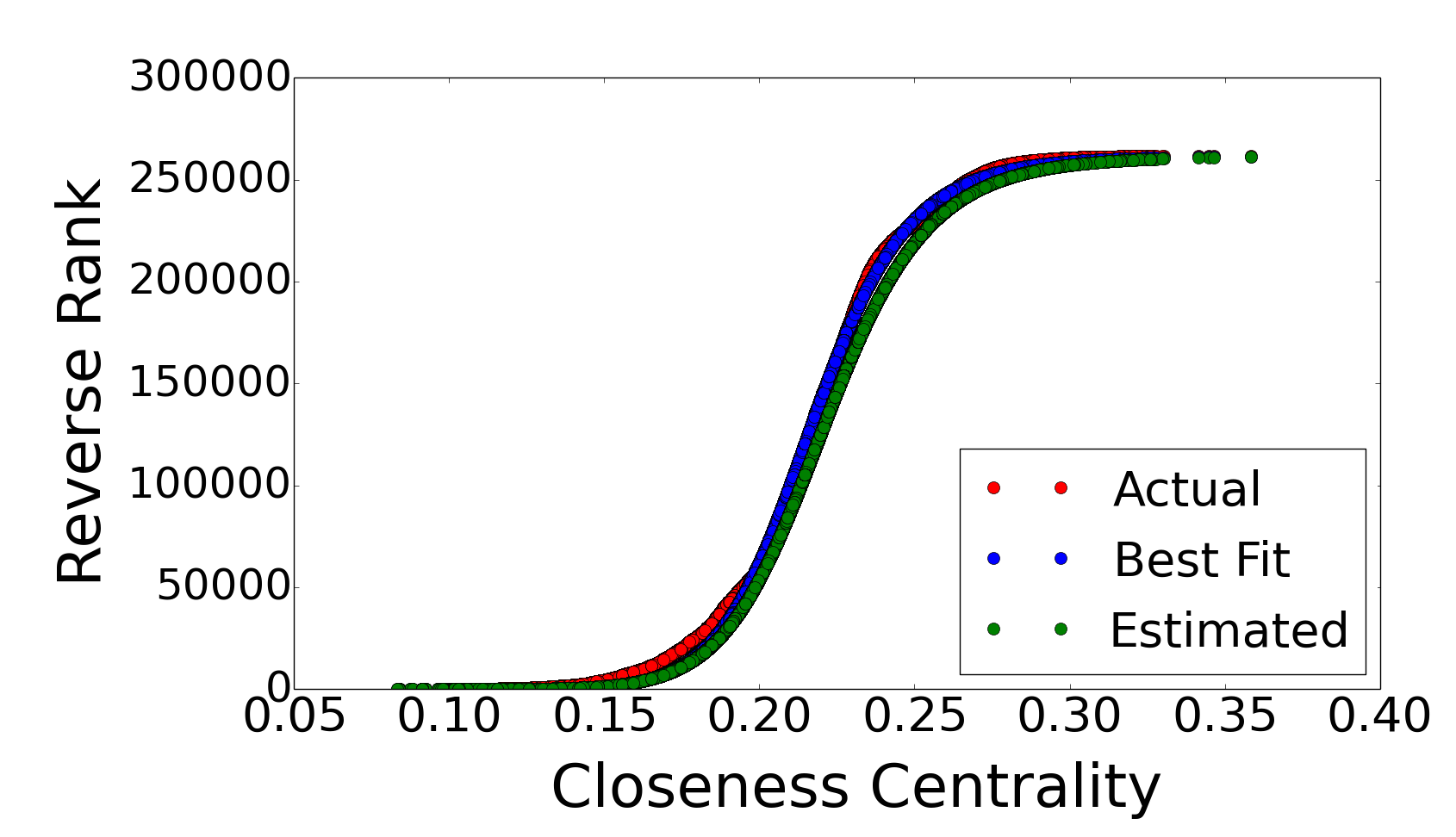}}\quad
  \subcaptionbox{Facebook}[.45\linewidth][c]{%
    \includegraphics[width=.45\linewidth]{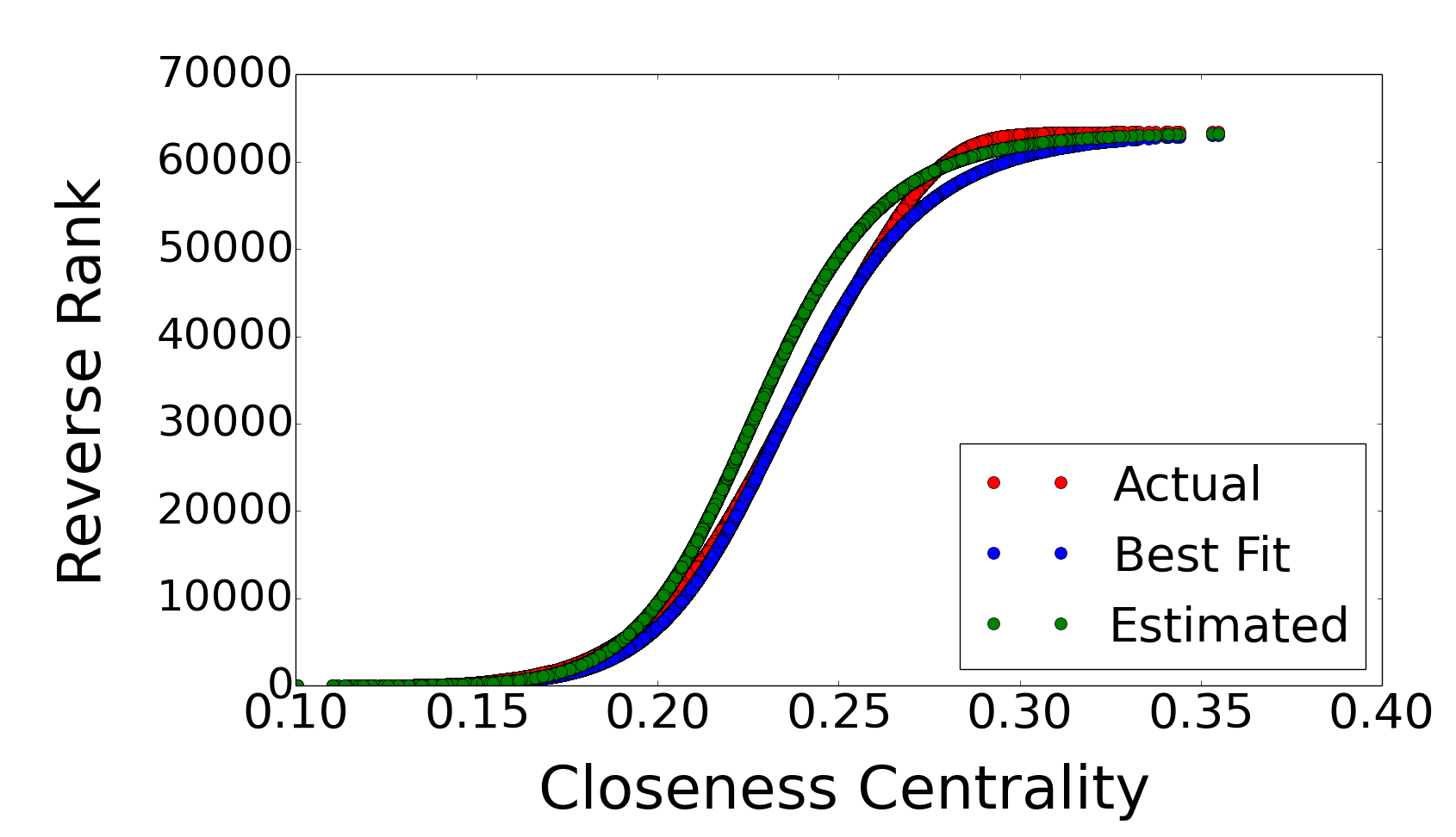}}\quad
  \subcaptionbox{Epinion}[.45\linewidth][c]{%
    \includegraphics[width=.45\linewidth]{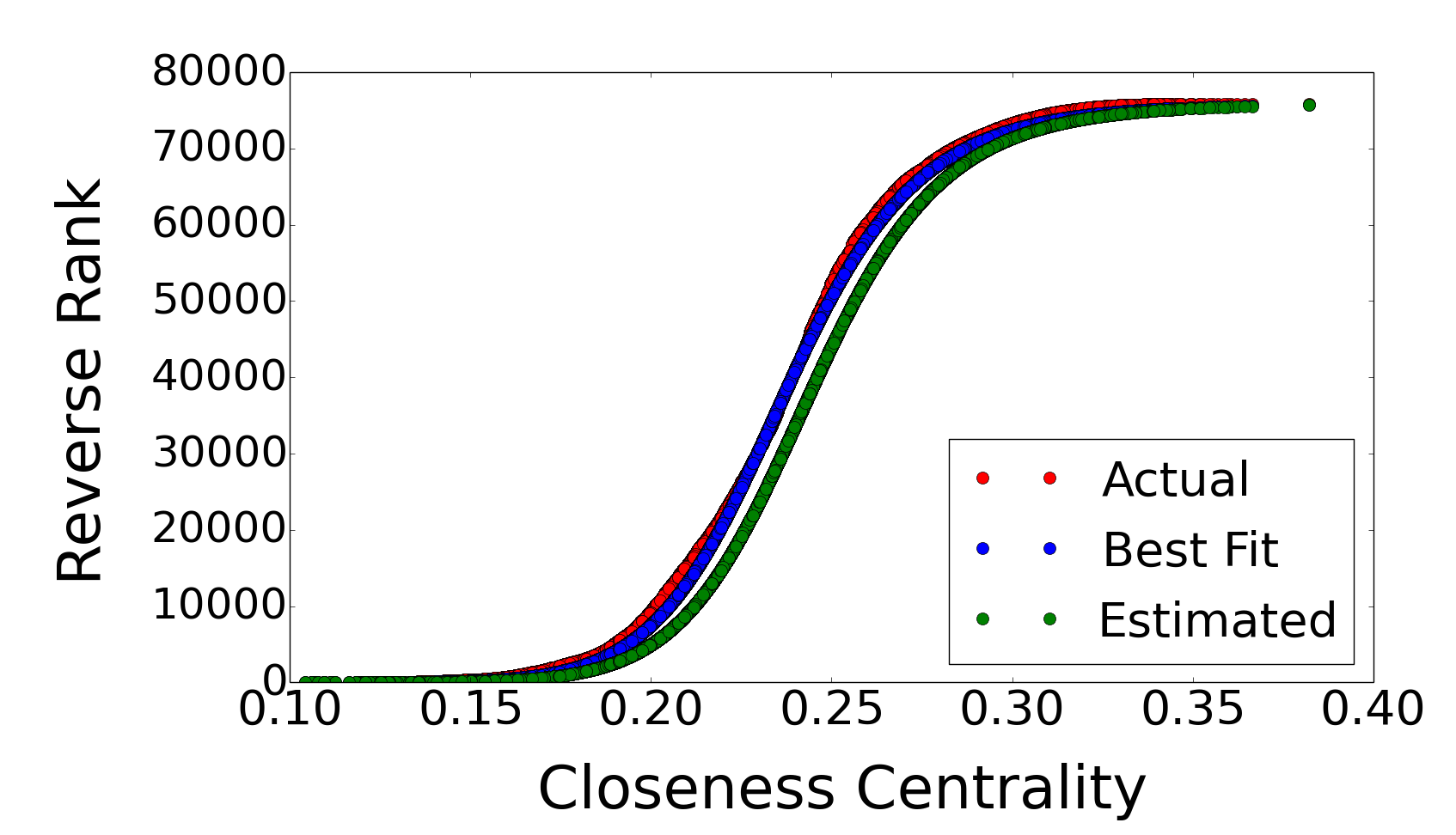}}\quad
  \subcaptionbox{Enron}[.45\linewidth][c]{%
    \includegraphics[width=.45\linewidth]{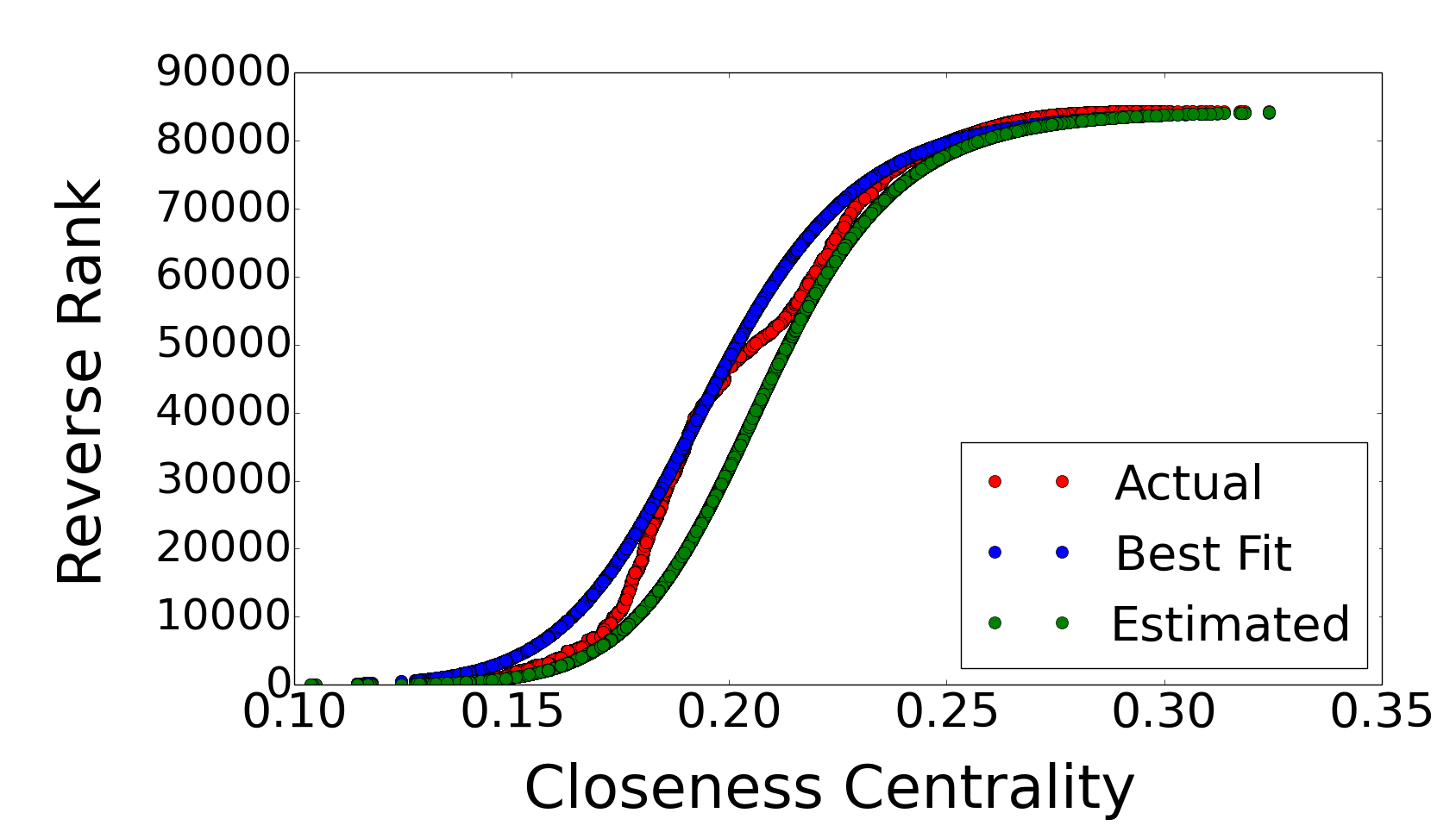}}\quad
  \subcaptionbox{Twitter}[.45\linewidth][c]{%
    \includegraphics[width=.45\linewidth]{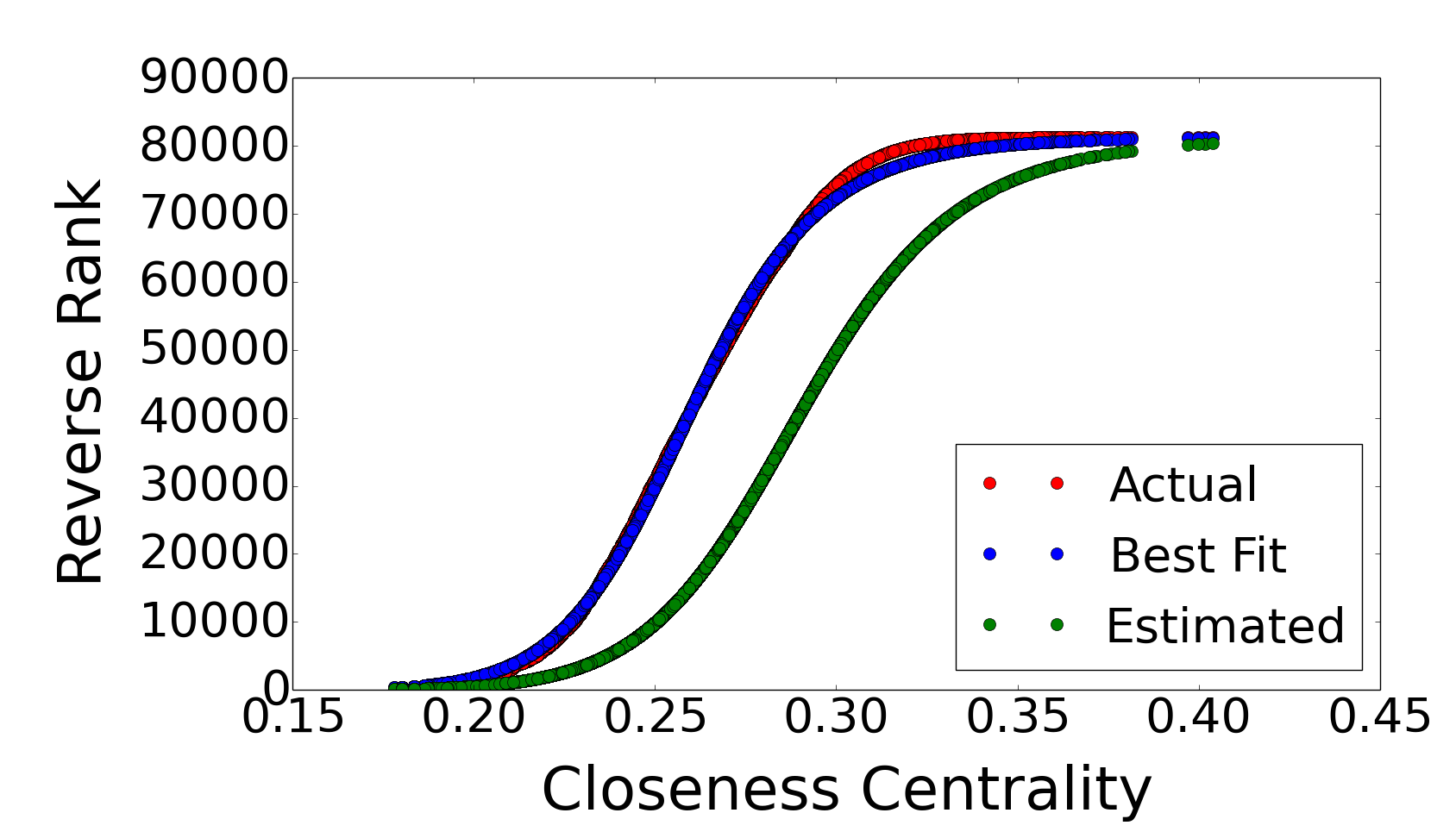}}\quad
  \caption{Reverse Rank versus Closeness Centrality}
  \label{figsort}
\end{figure*}

\section{Simulation Results}\label{sec-results}

In this section, we discuss error functions and simulation results on real world datasets.

\subsection{Error Functions}
The accuracy of the proposed methods is computed using absolute and weighted error functions that are discussed below:

\begin{enumerate}
\item \textbf{Absolute Error:} Absolute error for a node $u$ is computed as,
\begin{center}
$Err_{abs}(u) = |R_{est}(u) - R_{act}(u)|.$
\end{center}

The \textbf{percentage average absolute error} can be computed as $$ Err_{paae} = \frac{\text{average \; absolute \; error}}{\text{network \; size}} \cdot 100\%.$$

where the ${average \; absolute \; error}$ is computed by taking the average of absolute error for each node.

\item \textbf{Weighted Error:} In real life applications, importance of the rank depends on where does a node stand and how many total nodes are there. If you get 100 rank, it is admirable if there are 100,000 people but it is not good enough if there are only 500 people. So, the same rank value can be more important in a larger network than in a smaller one.
So, the impact of the error depends on percentile of the node as well as on the network size.
The proposed weighted error function considers both of these parameters and it is defined as,
\begin{center}
$Err_{wtd}(u) = \frac{Err_{abs}(u)}{n} \times percentile(u) \%.$
\end{center}

The percentile of a node $u$ can be calculated as, $percentile(u) = \frac{n - R_{act}(u) +1}{n} \times 100$. The weighted error increases linearly with the percentile of the node, and decreases with the network size.
\end{enumerate}

\subsection{Discussion}

\begin{table}[h]
\centering
\caption{Error in the Estimated Ranking}
\label{results}
\begin{tabular}{|l|l|l|l|l|l|l|}
\hline
Network & \multicolumn{2}{|c|}{Best Fit} & \multicolumn{2}{|c|}{Heuristic Method} & \multicolumn{2}{|c|}{Rand. Heuristic Method} \\ 

 & \multicolumn{2}{|c|}{Error} & \multicolumn{2}{|c|}{Error} & \multicolumn{2}{|c|}{Error (k=50)} \\ \hline


    & Paae & Wtd & Paae & Wtd & Paae & Wtd \\    \hline
    
Brightkite    &    1.48    &    0.79    &    7.96    &    3.89    &    2.96    &    1.32    \\    \hline
                                                        
DBLP    &    1.29    &    0.68    &    9.47    &    4.38    &    3.29    &    1.91    \\    \hline
                                                        
Digg    &    1.24    &    0.49    &    4.50    &    2.48    &    3.60    &    1.94    \\    \hline
                                                        
Enron    &    4.14    &    1.81    &    8.56    &    3.92    &    5.64    &    2.94    \\    \hline
                                                        
Epinion    &    1.42    &    0.70    &    7.72    &    3.96    &    2.16    &    1.00    \\    \hline
                                                        
Facebook    &    1.72    &    0.94    &    6.68    &    3.54    &    3.64    &    1.92    \\    \hline
                                                        
Google+    &    1.44    &    0.64    &    20.57    &    12.22    &    3.70    &    1.77    \\    \hline
Gowalla    &    1.52    &    0.81    &    17.96    &    9.30    &    3.97    &    1.94    \\    \hline
Slashdot    &    0.78    &    0.29    &    6.29    &    2.69    &    2.91    &    1.47    \\    \hline
                                                        
Twitter    &    0.98    &    0.59    &    24.64    &    14.70    &    3.41    &    1.96    \\    \hline

\end{tabular}                                                                        
\end{table}

The proposed methods are simulated on all datasets discussed in Table~\ref{datasets}. To measure the accuracy of the proposed methods, the absolute and weighted errors are computed for each node. Then it is averaged over all nodes to compute the overall error in the proposed methods. The errors of the proposed methods are shown in Table~\ref{results}.

The best fit error is computed by using the best-fit logistic curve on the reverse rank versus closeness centrality curve. The parameters of the best fit curve are computed using scaled levenberg-marquardt method with 1000 iterations and 0.0001 tolerance~\cite{more1978levenberg}. Once the parameters of the best fit curve are computed, the Equation~\ref{closerank} is used to compute the rank of a node. Results show that the error computed using best-fit parameters is very low, and the sigmoid closeness pattern can be efficiently used to estimate the rank of the nodes.

The error for heuristic and randomized heuristic methods is shown in Table~\ref{results} for $p=13.38$. The error varies with the $p$ value. The reverse rank versus closeness centrality plots for the best fit and approximated parameters are shown in Figure~\ref{figsort}.
The heuristic method gives a high error on some of the real world networks due to the error in the estimation of the parameters of the logistic curve or if the curve is not smooth.
The complexity to estimate the closeness rank of a node is same as computing its closeness centrality; this is a great improvement over the classical ranking method.

Next, we show that the improved randomized heuristic method gives a great improvement over the heuristic method. To compute the error, each experiment is repeated $40$ times for $k=50$, and the average of the errors is shown in Table~\ref{results}.

Thus, the results show that the sigmoid behavior of closeness centrality can be used to fast estimate the closeness rank of a node. The proposed methods can be efficiently applied on real world online networks as the proposed methods do not need to store the network. Their APIs can be used to run the BFT and compute the closeness centrality.

\section{Conclusion}\label{sec-conclusion}

In the present work, we studied behavior of the closeness centrality and its correlation with the structural properties of the network on real world scale-free social networks. We observed that the reverse ranking versus closeness centrality follows a sigmoid pattern. We further analyze how does the closeness centrality of nodes change as we move from the central region to the periphery. These unique characteristics of closeness centrality are used to propose heuristic method for closeness ranking of a node. The complexity of the proposed method is $O(m)$ that is a great improvement over the classical ranking method that takes $O(n \cdot m)$ time.

The proposed method is further improved using uniformly random samples where the closeness centrality values of $k$ sampled nodes are used to estimate a parameter $(c_{mid})$ of the sigmoid curve. The complexity of the improved method is $O(k \cdot m) \approx O(m)$ as ($k << n$). The accuracy of the proposed methods is verified using absolute and weighted error functions. Results show that the proposed methods can be efficiently used to estimate the closeness rank of a node.


\section{Future Directions}\label{sec-future}


In the proposed methods, we have estimated the slope of the sigmoid curve as an average of the slopes observed in real world networks. The slope of the curve denotes how sharply the closeness centrality increases for all middle layered nodes from periphery to central region. It depends on the density of the network and how the density changes from periphery to the center.

We study the correlation of the slope of the sigmoid curve ($p$) with the density of the network. In figure~\ref{density1}, we plot slope versus density for BA networks~\cite{barabasi1999emergence} having $40,000$ nodes.
We observe that the slope increases with the density, reaches its maximum value, and then it further decreases with the increase in the density. This correlation can be used to propose an estimator for $p$ value.

\begin{figure}[htp]
\centering
\includegraphics[width=0.8\linewidth]{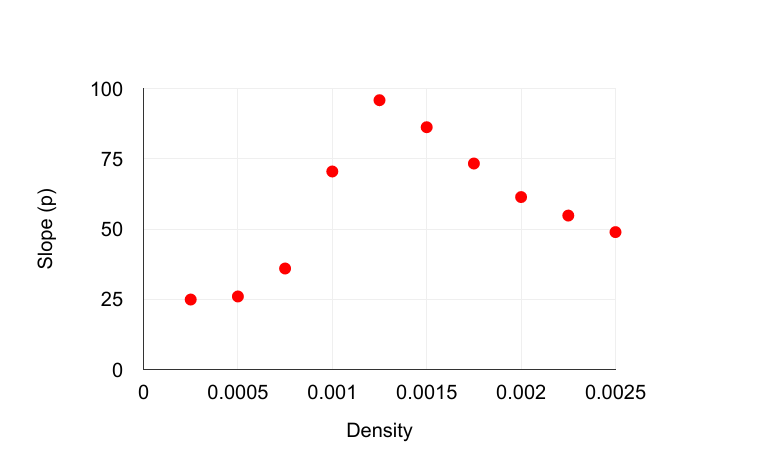}
\caption{Slope vs. Density on BA networks.}
\label{density1}
\end{figure}

The proposed methods can be further improved by estimating the closeness centrality of a node using its local information without having the entire network. These methods will be highly required due to the fast growth of the networks. This will improve the time complexity of the proposed methods many folds. The proposed methods can also be extended to other types of networks like weighted networks, directed networks, hypergraphs, and so on.

The rank estimation of a node based on other centrality measures like betweenness centrality, coreness, PageRank, is still an open problem.

\bibliographystyle{elsarticle-num}

\bibliography{/home/akrati/latex/mybib}

%
%
%
%

\end{document}